\renewcommand{\cite}[1]{\citep{#1}}
\newcommand{\textcite}[1]{\citet{#1}}
\newtheorem{theorem}{Theorem}
\newtheorem{lemma}[theorem]{Lemma}
\theoremstyle{definition}
\newcommand{\todo}[1]{}
\newcommand{\edit}[1]{}
\newcommand{\E}{\mathbb{E}}
\algnewcommand{\LineComment}[1]{\State \hfill \(\triangleright\) #1}
\newcommand{\polylog}{\operatorname{polylog}}
\newcommand{\DetMaxFind}{\textsc{DetMaxFind}}
\newcommand{\PruneAndRank}{\textsc{PruneAndRank}}
\title{Robust Max Selection}
\author{Trung Dang  \\ {\tt dddtrung@cs.utexas.edu}
\and
Zhiyi Huang \\ {\tt zhiyih@cs.utexas.edu}\footnote{All co-authors are affiliated with The University of Texas at Austin.}
}
\date{}
\begin{document}

\maketitle

\thispagestyle{empty}
\addtocounter{page}{-1}

\begin{abstract}
    We introduce a new model to study algorithm design under unreliable information, and apply this model for the problem of finding the uncorrupted maximum element of a list containing $n$ elements, among which are $k$ corrupted elements. Under our model, algorithms can perform black-box comparison queries between any pair of elements. However, queries regarding corrupted elements may have arbitrary output. In particular, corrupted elements do not need to behave as any consistent values, and may introduce cycles in the elements' ordering. This imposes new challenges for designing correct algorithms under this setting. For example, one cannot simply output a single element, as it is impossible to distinguish elements of a list containing one corrupted and one uncorrupted element. To ensure correctness, algorithms under this setting must output a set to make sure the uncorrupted maximum element is included.

    We first show that any algorithm must output a set of size at least $\min\{n, 2k + 1\}$ to ensure that the uncorrupted maximum is contained in the output set. Restricted to algorithms whose output size is exactly $\min\{n, 2k + 1\}$, for deterministic algorithms, we show matching upper and lower bounds of $\Theta(nk)$ comparison queries to produce a set of elements that contains the uncorrupted maximum. On the randomized side, we propose a 2-stage algorithm that, with high probability, uses $O(n + k \polylog k)$ comparison queries to find such a set, almost matching the $\Omega(n)$ queries necessary for any randomized algorithm to obtain a constant probability of being correct.
\end{abstract}

\newpage
\section{Introduction} \label{sec:intro}

One of the most mundane subroutines of many algorithms is finding the maximum of a list of elements. Unfortunately, the naive approach of passing through the list once is not robust to disturbances in the input elements, as a single corrupted element can alter its output drastically. In highly distributed processes where input data are owned by multiple actors, many of whom may act adversarially, the highly sensitive nature of the naive algorithm is simply unacceptable. 
For example, an actor controlling a small-valued element may act as if their element is larger than all other elements, thus the naive algorithm will output this small-valued element as the maximum instead of the true one.
It is therefore imperative to design an alternative approach that is robust under adversarial input corruption, while sacrificing little to no cost in performance.

Indeed, many previous works have investigated algorithm design under unreliable information. One of the most natural ways to model unreliable information is the \emph{liar model}, where algorithms can ask a (possibly lying) adversary for pairwise comparison queries between elements. Another model that has been extensively studied is the \emph{faulty-memory model}, which allows arbitrary memory faults during the course of the algorithm. However, unreliable information under these models may still cooperate well with the algorithm's execution, while that might not be the case with adversarial input corruption. To give an example, under the faulty-memory model, corrupted memory words still hold actual values in them, and therefore comparison queries regarding these memory words are still consistent with the, albeit corrupted, contained values. This assumption may be too strong in real life, where actors owning corrupted input elements can try to wreak havoc on the algorithm by answering queries in a way that is inconsistent with any value.

To amend the inadequacy of previous models, in this work, we introduce a new model that captures such adversarial corruption from the input data. At a high level, algorithms in our model can interact with input elements by querying a pairwise comparison black box. However, queries regarding corrupted elements may have arbitrary output, and in particular corrupted elements \emph{do not} have to behave as if they are any consistent values. For example, a corrupted element may answer ``no'' on the question ``is this element larger than $2$?'', but answer ``yes'' on the question ``is this element larger than $3$?''. This arbitrary behavior introduces new challenges in designing correct algorithms under this setting. As a simple example, for the task of finding the uncorrupted maximum element, an algorithm that is given one corrupted and one uncorrupted element is unable to pinpoint which element should be the uncorrupted maximum, as the comparison between the two elements is completely uninformative towards the goal.

For an introductory application of this model, we give preliminary results on the problem of finding the uncorrupted maximum. To ensure correctness, we allow algorithms under this model to output \emph{a set of elements} that must contain the uncorrupted maximum. We interpret this as giving a list of candidates for the central party that wants to know the maximum, using which this party can manually inspect the elements, for instance by contacting the actor owning each candidate element. As manual inspection is much more costly than asking pairwise comparisons, the natural objective for any algorithm is to first minimize the candidate set, and then to minimize the number of black-box pairwise comparisons used.

Our contributions are summarized as follows.
\begin{enumerate}
    \item In~\Cref{sec:alg-constraint}, we first show that any algorithm must produce a set of at least $\min\{n, 2k + 1\}$ elements to ensure that the maximum element is included, where $n$ and $k$ are the total number of elements and the number of corrupted elements respectively (\Cref{lem:setsizelb,lem:setsizeub}). Therefore, our work focuses on algorithms that produce a set of exactly such size.
    \item On the deterministic side, we show that no algorithms can use fewer than $(1 - o(1))nk$ queries (\Cref{t:detlb}), assuming $n \gg k$. On the upper bound side, we design an algorithm $\DetMaxFind$ that uses $(2 + o(1))nk$ queries (\Cref{t:det}).
    \item On the randomized side, we show that no algorithms can use fewer than $\Omega(n)$ queries to produce a set containing the maximum with $3/4$ probability (\Cref{thm:randomized-lb}). To complement this, we provide a two-stage algorithm $\PruneAndRank$ such that, with $1 - O\left(\frac{1}{\polylog k}\right)$ probability, uses $O(n + k \polylog k)$ queries and contains the maximum in its output (\Cref{thm:main-thm-randomized}). We remark that these bounds are matching under the assumption that $n \gg k$.
\end{enumerate}

\subsection{Related works}

Algorithm design under unreliable information is a classic topic in theoretical computer science, started by the work of~\cite{von1956probabilistic}. Various settings and models have been introduced to capture different ways information can become unreliable. We give a non-comprehensive list of studied settings most relevant to our work below.

\paragraph{Liar model.} Under this model, algorithms can ask queries to a possibly lying adversary. Different types of assumptions can be made about the adversary; for example, \emph{constant bounded errors} where up to some number of queries can be incorrect~\cite{rivest1980coping,ravikumar1987selecting,ravikumar2002fault}, \emph{probabilistic errors} where queries can be incorrect with some probability independently~\cite{berry1986discrete,feige1994computing,braverman2016parallel,gu2023optimal,gal1978stochastic,horstein1963sequential,pelc2002searching}, or \emph{prefix-bounded errors} where any prefix of $i$ queries have at most $pi$ incorrect queries~\cite{aslam1991searching,borgstrom1993comparison,pelc2002searching}. The most relevant work under this model to our work is~\cite{ravikumar1987selecting}, where they showed that it is possible to find the maximum element using $(k + 1)n - 1$ comparisons if at most $k$ queries are lies. For the closely related problem of sorting,~\cite{ravikumar2002fault} shows that it is possible to sort an array in $O(n \log n)$ under the constant bounded setting if there are at most $O(\frac{\log n}{\log \log n})$ faults; ~\cite{gu2023optimal} gives the matching bounds to sorting and binary search under probabilistic errors setting up to $1 + o(1)$ factor, improving upon~\cite{feige1994computing,wang2022noisy}; finally,~\cite{borgstrom1993comparison} shows that under the prefix-bounded setting, it takes an exponential amount of queries just to determine whether a list is sorted. We note that as the corruption in this model chiefly lies in the queries and not the input data, this model can take advantage of repeating the same queries multiple times, whereas it is not possible to do the same in our case.

\paragraph{Faulty memory model.} Introduced by~\cite{finocchi2004sorting}, this model concerns with \emph{memory failure} that can happen anywhere and at any place during the execution of the algorithm, except for $O(1)$ memory words that are reliable. Since then, a number of different works~\cite{ferraro2006price,finocchi2008sorting,finocchi2009optimal,kopelowitz2012selection,gieseke2012resilient,caminiti2017resilient,leucci2019resilient} have investigated different algorithmic tasks under this model. For the task of selection,~\cite{kopelowitz2012selection} designed a deterministic $O(n)$ selection algorithm that, on for a given $k$ and number of memory faults $\alpha$, return some \emph{possibly faulty} elements with rank between $k - \alpha$ and $k + \alpha$. We remark that while this result is very similar to our result superficially, since corrupted data in this model \emph{is still actual data}, it is fundamentally different from our result as our corrupted elements may not even behave consistently to any real data. For the related task of sorting,~\cite{finocchi2004sorting} designed an $O(n \log n + \alpha^3)$ algorithm to sort uncorrupted data, and showed that any $O(n \log n)$ algorithm can only tolerate up to $O((n \log n)^{1/2})$ memory faults. These bounds are later tightened by~\cite{finocchi2008sorting}, who showed an $O(n \log n + \alpha^2)$ algorithm for the same task.

\paragraph{Fault-Tolerant Sorting Networks.}
This model mainly focuses on the concept of destructive faults in fault-tolerant sorting networks \cite{assaf1990fault,leighton1999tight,366814,yao1985fault}, where the comparators in the sorting network could be faulty, and can even destroy one of the inputs by outputting one same value twice. When those destructive faults are random, \cite{assaf1990fault} presents a sorting network with size $O(n\log^2{n})$ and completes the sorting task with high probability. A following work by \cite{leighton1999tight} shows that this bound is tight. The main technique of this fault-tolerant sorting network is duplicating input values using fault-free data replicators, which we once again remark is impossible to do in our case.
\section{Preliminaries}

\subsection{Model}

Formally, our model is as follows. There are $n$ elements labeled $X = \{x_1, x_2, \dots, x_n\}$, where any two elements can be compared. In other words, there exists a complete directed graph $G$ on the $n$ elements, which we will call the \emph{comparison graph}. For any two elements $x_u$ and $x_v$, we say $x_u$ is \emph{larger} than $x_v$ if the edge between them is directed towards $u$, and $x_u$ is smaller than $x_v$ vice versa. Among these $n$ elements are $k$ special \emph{corrupted} elements, and the remaining $n - k$ elements are \emph{uncorrupted}. It is known that the edges between the $n - k$ uncorrupted elements form a complete directed acyclic graph, while edges incident to the $k$ corrupted elements may be arbitrarily, possibly adversarily directed. In particular, there may be directed cycles that go through any of the directed elements in the comparison graph.

An algorithm is given $n$ and $k$, as well as a black box that can access the direction of the edge between any pair of elements, called the \emph{pairwise comparison black box}. The algorithm is then allowed to repeatedly query for the edge direction between any pair of elements. Finally, it must output \textbf{a set of elements} that contains the \emph{uncorrupted maximum} (sometimes shortened to just the \emph{maximum}), i.e. some uncorrupted element $x_m$ such that for every other uncorrupted element $x_i$, $x_m$ is larger than $x_i$. We note that such an element always exists, as the induced subgraph of the comparison graph on uncorrupted elements forms a complete DAG as discussed before. The algorithm's objective is to find such a set using as few black box comparisons as possible.

Under the model so far, an algorithm can simply output the set of all elements, which trivially contains the uncorrupted maximum. Therefore, we will restrict our attention to algorithms whose output size is as small as possible before optimizing for the number of pairwise comparisons. In the next section, we discuss why the output size of such an algorithm should be $\min\{n,2k + 1\}$.

\subsection{Algorithmic constraint}\label{sec:alg-constraint}

Under the presented model, we remark that it is not possible to output a set containing one element only. As a simple example, consider a graph of $n = 2$ elements, where $k = 1$ element is uncorrupted. As the direction of the edge between the two elements is completely uninformative toward determining the uncorrupted element, any algorithm must output both elements to ensure correctness.

This poses an essential question for this model: for the set $S$ that any algorithm produces, how large does $S$ need to be such that the maximum is always contained in $S$? We answer the question via the following lemma.

\begin{lemma}
\label{lem:setsizelb}
For any $n$ and $k$, there exists an instance where the output set size $|S|$ for any algorithm is at least $\min\{n,2k+1\}$ to ensure that the maximum element is always included.

\end{lemma}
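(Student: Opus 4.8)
The claim is that for any $n$ and $k$, there is an instance forcing any algorithm to output a set of size at least $\min\{n, 2k+1\}$. I'll think about how to construct such a hard instance via an adversary argument.

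First, when $n \le 2k+1$, the bound is just $n$, meaning the algorithm must output everything. This should follow from a symmetrization/indistinguishability argument: the adversary can always make the element the algorithm omits look like it could be the uncorrupted maximum in an alternative consistent world. So the interesting case is $n > 2k+1$, where we need to force size $2k+1$.

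The plan is an adversary argument. Suppose an algorithm outputs a set $S$ with $|S| \le 2k$. I want to exhibit two "consistent worlds" — two assignments of which $k$ elements are corrupted, together with edge directions on corrupted-incident edges — that are indistinguishable given all the queries the algorithm made, but in which the uncorrupted maximum differs, and in at least one of them the maximum lies outside $S$. Concretely: pick some element $x \notin S$ (exists since $|S| \le 2k < n$). I want to build a world where $x$ is the uncorrupted maximum. The obstruction is that in the "true" run the algorithm may have seen edges pointing away from $x$ (making $x$ look small). To fix this, I'd have the adversary, during the actual run, answer queries so as to keep many elements "ambiguous" — specifically, maintain that the set of elements that *could* be the uncorrupted max (in some consistent completion) stays large. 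A clean way: the adversary designates a pool of $2k+1$ candidate elements and answers all queries among them and between them and the rest so that for *every* element $y$ in the pool, there is a consistent world in which $y$ is uncorrupted and is the maximum. This works because to "kill" a candidate $y$ — i.e., certify $y$ is either corrupted or not the max — the algorithm essentially needs to see $y$ beaten by something it believes is uncorrupted; but with $k$ corruptions available the adversary can always reinterpret the $\le k$ elements that "beat" $y$ as corrupted, or reinterpret $y$ itself. Balancing this bookkeeping so that $2k+1$ candidates survive is the heart of the argument.

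The key steps, in order: (1) handle $n \le 2k+1$ separately by a direct indistinguishability argument (omitting any single element is unsafe). (2) For $n > 2k+1$, assume toward contradiction $|S| \le 2k$ and fix $x \notin S$. (3) Construct world $W_1$: all of $X \setminus (\text{something})$... more precisely, pick the $k$ corrupted elements and edge directions to be consistent with everything queried, such that $x$ is the uncorrupted max — the freedom of $k$ corrupted elements plus the adversary's online answers should make this possible. Since the algorithm's transcript is identical, it must also be correct on $W_1$, so $x \in S$, contradiction. (4) The main obstacle is step (3): showing the adversary has enough slack to *simultaneously* keep $x$ (indeed any non-output element) viable as the true maximum while having answered queries online without knowing $S$ in advance. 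The fix is that the adversary's online strategy should be oblivious — it maintains a large "viable" set and only commits at the end; I expect the right invariant is that after any set of queries, at most $2k$ elements can be excluded from being "the uncorrupted maximum in some consistent world," because excluding one candidate costs the algorithm the ability to excuse it via corruption, and there are only $k$ corruptions, roughly giving a factor of $2$. Making this counting precise — e.g., via a potential function or a matching between "excluded" candidates and corruption budget — is where the real work lies.
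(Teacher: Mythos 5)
Your proposal sketches an adaptive-adversary argument but explicitly stops short of the crux: you write that making the counting precise ``is where the real work lies,'' and that work is not carried out. As written, the claimed invariant---that after any set of queries at most $2k$ elements can be certified non-maximal---is asserted, not proved, so this is a genuine gap rather than a complete proof.

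Beyond being incomplete, the route you chose is substantially harder than necessary, and the lemma actually asks for something stronger than a per-algorithm adversary: it asserts a single instance that defeats every algorithm. The paper produces exactly that with a fixed, fully symmetric construction. Take $2k+1$ elements arranged as a circulant tournament on $\mathbb{Z}_{2k+1}$: each $x_i$ beats $x_{i+1},\dots,x_{i+k}$ and loses to $x_{i-1},\dots,x_{i-k}$ (indices mod $2k+1$). Let $x_1,\dots,x_{k+1}$ be uncorrupted; among them the circulant edges are exactly the linear order $x_1 > x_2 > \dots > x_{k+1}$, so the instance is valid, with $x_1$ the uncorrupted maximum and $x_{k+2},\dots,x_{2k+1}$ corrupted. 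By rotational symmetry every vertex of the tournament plays an identical role, so no sequence of queries can distinguish them, and any correct algorithm must output all $2k+1$. For $n > 2k+1$, this block is placed above the remaining $n-(2k+1)$ uncorrupted elements; for $n < 2k+1$, the same circulant idea on $n$ vertices forces output size $n$. No online answering strategy, potential function, or bookkeeping of ``viable'' candidates is needed---the symmetry of the fixed instance does all the work. If you wish to salvage the adversary route, you would need to specify the adversary's answering rule concretely and prove your conjectured invariant; the present sketch does not establish it.
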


\begin{proof}[Proof of~\Cref{lem:setsizelb}]

The main idea of the proof is constructing a special example with $n = 2k+1$ elements, where $k$ of them are corrupted. Each element looks symmetric in the comparison graph.

Explicitly, consider those $2k+1$ elements $\{x_1,x_2,\dots,x_{2k+1}\}$ where $x_{k+1} \to x_k \to \ldots \to x_1$ are uncorrupted elements with such topological order so the uncorrupted maximum is $x_1$, and elements $\{x_{k+2},x_{k+3},\dots,x_{2k+1}\}$ are corrupted. Observe we can carefully design the edges incident to the corrupted elements such that for any $i\in [2k+1]$, $x_i$ is larger than $\{x_{i+1},x_{i+2},\dots,x_{i+k}\}$ and smaller than $\{x_{i-1},x_{i-2},\dots,x_{i-k}\}$, where all the indices are with respect to modulo $2k+1$; notice that all the comparison results of uncorrupted elements are consistent to the aforementioned topological order. It's easy to observe that every element inside $\{x_1,x_2,\dots,x_{2k+1}\}$ is symmetric according to the comparison results, so no algorithms can distinguish between them and therefore must output the whole set.

When $n >  2k+1$, we can embed the example above with $\{x_1,x_2,..,x_{k+1}\}$ being the largest $k+1$ elements among the $n - k$ uncorrupted elements, while $\{x_{k+2},\dots,x_{2k+1}\}$ are the $k$ corrupted elements; we further let every element in this set be larger than the remaining elements. Since every element in this set of size $2k+1$ behaves symmetrically, any algorithm must output the full set to ensure the maximum is included.

When $n<2k+1$, we can construct a similar example where $x_{n - k} \to x_{n - k - 1} \to \ldots \to x_1$ are uncorrupted elements with such topological order so $x_1$ is the uncorrupted maximum, and elements $\{x_{n - k + 1},x_{n - k + 2}, \dots, x_n\}$ are corrupted. Then, we direct all of the edges such that for all $i\in [n]$, $x_i$ is greater than $\{x_{i+1},x_{i+2},\dots,x_{i+\lfloor (n - 1) / 2 \rfloor}\}$ and smaller than $\{x_{i-1},x_{i-2},\dots,x_{i-\lceil (n - 1) / 2 \rceil}\}$. It is easy to verify that no algorithms can distinguish between any element and therefore must output the full set.
\end{proof}

A natural follow-up is to determine whether an algorithm that outputs a set of size $\min\{n, 2k+1\}$ exists. The following lemma answers that question positively.

\begin{lemma}\label{lem:setsizeub}
    For every $i \in [n]$, let $r_i$ be the number of other elements $x_j$ larger than $x_i$ (which can be calculated using $n - 1$ queries). Let $S$ be the set of $\min\{n, 2k+1\}$ elements with the smallest $r_i$'s (ties are broken arbitrarily). Then $S$ contains the uncorrupted maximum.
\end{lemma}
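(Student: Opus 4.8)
The plan is to bound the rank $r_m$ of the uncorrupted maximum $x_m$ and show it is among the $\min\{n,2k+1\}$ smallest. Since $x_m$ is larger than all $n-k-1$ other uncorrupted elements, the only elements that can be "larger" than $x_m$ (i.e., have an edge pointing toward them from $x_m$... wait, let me restate) are corrupted elements. Concretely, for any uncorrupted $x_j \neq x_m$, the edge between them is directed toward $m$, so $x_j$ does not count toward $r_m$. Hence $r_m \leq k$, because only the $k$ corrupted elements can possibly be larger than $x_m$.

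Next I would argue that at most $\min\{n,2k+1\}-1 = \min\{n-1, 2k\}$ elements can have rank strictly smaller than... no — I need that at most $\min\{n,2k+1\}-1$ elements have rank $\le r_m$ in a way that could crowd $x_m$ out, OR more cleanly: I will show at most $2k$ elements (other than $x_m$) have rank $r_i \le k$. An uncorrupted element $x_i \neq x_m$ has $r_i \ge 1$ trivially but that's not enough; instead, order the uncorrupted elements by the topological order of their DAG as $x_{(1)} = x_m, x_{(2)}, \dots$; the $t$-th one in this order is larger than at least $n-k-t$ uncorrupted elements, hence at most $t-1$ uncorrupted elements plus at most $k$ corrupted elements are larger than it, so $r_{(t)} \le k + t - 1$. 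Thus the uncorrupted elements with $r_i \le k$ are at most the first... hmm, this gives $r_{(t)} \le k$ only forces nothing directly. Let me flip it: I want to count how many elements have $r_i < r_m$ or tie-broken ahead. The clean claim is: among all $n$ elements, at most $2k$ of them (excluding $x_m$) can have $r_i \le r_m \le k$. Indeed there are only $k$ corrupted elements; and an uncorrupted $x_{(t)}$ has $r_{(t)} \ge (n-k) - t$ (number of uncorrupted elements below it... no, above: the ones above it in topological order, which is $t-1$, wait $r$ counts elements larger, so uncorrupted-larger count is exactly $t-1$). So $r_{(t)} \ge t-1$, giving $r_{(t)} \le k$ only when $t \le k+1$. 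Therefore at most $k+1$ uncorrupted elements (including $x_m$, the $t=1$ case) have $r_i \le k$, so at most $k$ uncorrupted elements other than $x_m$, plus at most $k$ corrupted, i.e. at most $2k$ elements other than $x_m$ have $r_i \le k \ge r_m$. Hence $x_m$ is among the $2k+1$ smallest (and trivially among all $n$), so it lands in $S$; ties broken arbitrarily are fine since we only used $\le$, meaning even in the worst tie-break $x_m$ is at position $\le 2k+1$.

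The main obstacle — really the only subtle point — is handling ties correctly: because ties in the $r_i$ values are broken arbitrarily, I must ensure the counting argument produces a \emph{strict} enough bound, namely that the number of elements that could be ranked ahead of $x_m$ under any tie-breaking is at most $\min\{n,2k+1\}-1$. The argument above does this: at most $2k$ elements total (other than $x_m$) satisfy $r_i \le r_m$, so regardless of tie-breaking $x_m$ occupies one of the first $2k+1$ slots, and of course one of the first $n$ slots. I would also note the parenthetical claim that each $r_i$ is computable with $n-1$ queries is immediate: query $x_i$ against every other element once.
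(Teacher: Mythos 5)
Your proof is correct and follows essentially the same approach as the paper: bound $r_m \le k$, then count that at most $k$ corrupted elements and at most $k+1$ uncorrupted elements can have rank $\le k$, hence $x_m$ is among the $\min\{n,2k+1\}$ smallest under any tie-breaking. You spell out the topological-order argument for the uncorrupted count, which the paper asserts without detail, but the underlying idea is identical.
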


\begin{proof}[Proof of~\Cref{lem:setsizeub}]
    Let $m \in [n]$ be the index of the uncorrupted maximum, and note that $r_m \le k$. We further observe that there are at most $\min\{n, 2k+1\}$ elements whose $r$ values are at most $k$, namely the $k$ corrupted elements and the top $\min\{k + 1, n - k\}$ uncorrupted elements. Therefore, the set $S$ from the lemma must contain the maximum element.
\end{proof}

Therefore, for the rest of this work, we assume that $n > 2k + 1$ (as one can simply output all elements otherwise) and consider algorithms whose output size is exactly $2k + 1$.
\section{Deterministic Algorithm} \label{sec:deterministic}

% In this section, we provide matching lower bounds and upper bounds for deterministic algorithms.

In this section, we will show a lower bound of $\Omega(nk)$ queries on deterministic algorithms. We will also describe our deterministic algorithm that outputs a set of size $2k+1$ which contains the maximum element, and uses $O(nk)$ comparison queries.

For the lower bound, we first show the following theorem.

\subsection{Lower bound}

We first show that any deterministic algorithm for this setting must use $\Omega(nk)$ comparison queries.

\begin{theorem}
\label{t:detlb}
Any deterministic algorithm requires $(n-(2k+1))(k+1)=nk+(n-2k^2-3k-1)$ queries to ensure the maximum is contained in the output set.
\end{theorem}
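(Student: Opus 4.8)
The plan is to use an adversary argument: we design an adaptive adversary that answers the algorithm's comparison queries, maintaining consistency with some valid instance, and show that unless the algorithm has made at least $(n-(2k+1))(k+1)$ queries, the adversary can force two different instances that are indistinguishable to the algorithm but have different uncorrupted maxima --- or more precisely, force any candidate output set of size $2k+1$ to miss the true maximum in one of the consistent completions. First I would set up the skeleton: the adversary will eventually designate $k$ of the elements as corrupted and the rest as uncorrupted, but it defers this choice, answering queries so as to keep many possible ``identities'' alive. The natural invariant to maintain is that for a current candidate element $x$ to be ruled out as the uncorrupted maximum, the algorithm must have witnessed enough evidence against it.

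The key quantitative idea I would pursue: call an element $x$ a \emph{potential maximum} if the queries so far do not preclude $x$ from being the uncorrupted maximum in some instance consistent with the answers given. An element can be eliminated as a potential maximum only if the algorithm has found at least $k+1$ elements that beat it --- because with only $k$ elements above it, the adversary could declare all those $k$ to be corrupted, making $x$ the uncorrupted maximum. So the adversary strategy is: whenever the algorithm queries $x$ versus $y$ and neither has yet accumulated $k+1$ ``losses'', answer in whichever direction keeps the loss-counts balanced / keeps the most elements viable (e.g.\ answer so that $x$'s and $y$'s current out-degrees-of-being-beaten stay below the threshold as long as possible, breaking ties to avoid creating new eliminations). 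I would then argue that to shrink the set of potential maxima down to size $2k+1$ (the output must have this size, and if more than $2k+1$ potential maxima remain the adversary wins since the algorithm's set of size $2k+1$ must omit one of them, and the adversary then realizes an instance where that omitted element is the true maximum), the algorithm must eliminate at least $n - (2k+1)$ elements, and each elimination costs $k+1$ queries that are ``charged'' to the eliminated element. Since these charges are disjoint (a query $x$ vs $y$ contributes a loss to exactly one of $x,y$, and we only count the $\le k+1$ losses needed to eliminate each of the $n-(2k+1)$ eliminated elements), the total is at least $(n-(2k+1))(k+1)$.

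The delicate part --- and what I expect to be the main obstacle --- is showing the adversary can maintain global consistency: that after answering all queries with this local rule, there genuinely exists a single instance (a choice of $k$ corrupted elements and a complete DAG on the $n-k$ uncorrupted ones, plus arbitrary edges on corrupted ones) consistent with every answer, \emph{and} in which the algorithm's output set of size $2k+1$ fails. The subtlety is that the uncorrupted elements must form an acyclic tournament, so the adversary cannot answer the queries among would-be-uncorrupted elements arbitrarily; it must be able to topologically arrange them. I would handle this by having the adversary commit, at the end, to declaring corrupted exactly the $k$ elements that are ``most constrained'' (e.g.\ place the corrupted elements to absorb any cycles or inconsistencies among answered edges), and argue that the un-queried edges give enough freedom: since each eliminated element was touched by only its $\le k+1$ charged queries in the relevant direction, there is slack to linearly order the survivors. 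Concretely, I would show that one can always pick the corrupted set and an acyclic orientation/completion so that the designated true maximum (some element omitted from the output, which the adversary gets to choose among the $>2k+1$ survivors, or if fewer survive, among the $2k+1$) beats every uncorrupted element --- using that it had at most $k$ recorded losses, all of which can be attributed to corrupted elements. Finally I would reconcile the exact constant: the bound $(n-(2k+1))(k+1) = nk + (n - 2k^2 - 3k - 1)$ should fall out by noting $n-(2k+1)$ eliminations each needing $k+1$ witnesses, with no double-counting, and I would note the assumption $n \gg k$ makes the leftover term positive so the bound is genuinely $(1-o(1))nk$.
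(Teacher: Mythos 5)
Your central counting idea is exactly the paper's pigeonhole: an element can be excluded from being the uncorrupted maximum only after it has been observed to lose to $k+1$ others, each query charges a loss to exactly one element, and there are $n-(2k+1)$ elements outside the output set, giving the $(n-(2k+1))(k+1)$ bound. That part is correct and is the heart of the argument.

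Where your proposal has a genuine gap is precisely the place you flag as the ``delicate part'': the adaptive, loss-balancing adversary does not obviously produce a globally consistent final instance, and you do not close that hole. The balancing rule can create directed cycles among elements that must all end up uncorrupted --- for instance, with three fresh elements and the tie-breaking rule of charging the element with fewer losses, the answers to the queries $(x_1,x_2)$, $(x_2,x_3)$, $(x_3,x_1)$ come out $x_1>x_2>x_3>x_1$. The assertion that ``there is slack to linearly order the survivors'' is stated but not proved, and it is not clearly true under this answering rule. Moreover, the balancing heuristic is unnecessary. The paper uses a \emph{non-adaptive} construction that sidesteps the consistency issue entirely: fix a single concrete comparison graph that is a total order $x_1<x_2<\cdots<x_n$ (so every answer is automatically consistent), run the deterministic algorithm against it, and \emph{afterward} apply pigeonhole to find some $x_m\notin S$ observed to lose to at most $k$ elements, call them $C$. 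Then form a second instance by flipping only the \emph{unqueried} edges at $x_m$ so that $x_m$ beats every element outside $C$, declare $C$ (padded to $k$) corrupted, and note the algorithm cannot distinguish the two instances because none of the flipped edges was ever queried. This makes $x_m$ the uncorrupted maximum of the second instance while $x_m\notin S$. I would recommend replacing the adaptive adversary with this static construction; your charging argument then closes the proof cleanly.
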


\begin{proof}[Proof of~\Cref{t:detlb}]
At a high level, we construct two instances such that any algorithm that uses fewer queries than the stated quantity behaves identically under the two instances. However, one of the instances has its maximum outside of the output set of the algorithm.

For the first instance, we construct its comparison graph $G$ where for all $1 \le i < j \le n$, we let $x_j$ be larger than $x_i$, i.e. the edge goes $x_i \to x_j$. We remark that any $k$ elements in this instance can be the corrupted elements. Therefore, we will not specify the $k$ uncorrupted elements now, and we will use this as our leverage point later.

Consider any deterministic algorithm that performs strictly fewer than $(n-(2k+1))(k+1)$ queries on this instance, and let $S$ be the output set of the algorithm, where we recall that $|S| = 2k + 1$. Since every comparison query adds one to a ``smaller'' counter of exactly one element, by pigeonhole principle, there exists an element $x_m \notin S$ where $x_m$ was compared to be smaller than at most $k$ other elements in $X$. Let the set of at most $k$ elements larger than $x_m$ be $C$. We now fix the first instance such that every element in $C$ (along with some other arbitrary elements to make the total $k$) is a corrupted element. Finally, albeit an unimportant detail, we let the uncorrupted maximum to be the element with the smallest index that is not corrupted.

We now construct the second instance as follows. We let the comparison graph $G'$ be the same as $G$. We then modify the edges incident to $x_m$ so that for every other element $x_i$, $x_m$ is smaller than $x_i$ if and only if $x_i \in C$. We then let the uncorrupted maximum be $x_m$, and keep the same set of corrupted elements the same as the first instance. Since only $x_m$ has its edges changed, and $x_m$ is only smaller than elements in $C$ which are defined to be corrupted, the second instance is valid.

We remark that the only difference between the comparison graphs of the two instances are all edges $x_m \to x_i$ in $G$, where $x_i$ is uncorrupted in both instances, were changed to be $x_m \leftarrow x_i$ in $G'$. Furthermore, we observe that all queries made by the algorithm in the first instance never ask about these edges. Otherwise, $x_i$ would have been in $C$, which contradicts the fact that $x_i$ is uncorrupted in the first instance. Therefore, the sequence of queries and their black-box output of the algorithm for both instances are completely identical, and therefore the output set $S$ for both instances is the same. However, under the second instance, $S$ does not contain $x_m$ by definition, which means the algorithm has failed to output the uncorrupted maximum in the second instance.
\end{proof}

In practice, we can assume that $n \gg k$, turning the result of \Cref{t:detlb} into a lower bound of $(1 - o(1)) nk$ for deterministic algorithms. 

\subsection{Upper bound}

To complement the previous result, we now show a deterministic algorithm that uses $O(nk)$ queries. Under the assumption that $n \gg k$, this bound is tightened to be $(2 + o(1)) nk$ queries.

\begin{theorem}
\label{t:det}
There exists a deterministic algorithm, which outputs a set $S$ with size $2k+1$ such that the maximum is contained in $S$, and uses $(n-(k+1))(2k+1)=2nk+(n-2k^2-3k-1)$ queries.
    
\end{theorem}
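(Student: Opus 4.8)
The plan is to process the $n$ elements one at a time while maintaining a \emph{candidate set} $C$ of size at most $2k+1$, under the invariant that the comparisons already queried induce a \emph{complete} tournament on $C$ (every pair of elements currently in $C$ has been queried). First we insert the first $2k+1$ processed elements one by one: when we insert the $i$-th one we query it against all $i-1$ current members of $C$ and then add it, so that after this phase $|C|=2k+1$ and the complete-tournament invariant holds. Thereafter, for each new element $z$ we query $z$ against all $2k+1$ current candidates; if $z$ was recorded to lose to at least $k+1$ of them we discard $z$, and otherwise we add $z$ to $C$ and then evict from $C\cup\{z\}$ some element of maximum \emph{recorded in-degree} (an element beaten, within $C\cup\{z\}$, by the largest number of others). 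Since $n>2k+1$, the first phase always completes and afterwards $|C|$ stays exactly $2k+1$ (a discard keeps $C$ fixed, a swap replaces one element by another), so we simply output $S:=C$.

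For the query count, the first phase costs $0+1+\dots+2k=\binom{2k+1}{2}=k(2k+1)$ queries, and the second phase costs exactly $2k+1$ queries for each of the remaining $n-(2k+1)$ elements, for a total of $k(2k+1)+(n-2k-1)(2k+1)=(2k+1)(n-k-1)=(n-(k+1))(2k+1)$, matching the claim. The complete-tournament invariant is immediate: before $z$ is processed it holds on $C$; after querying $z$ against all of $C$ it holds on $C\cup\{z\}$; and it is inherited by any subset we retain.

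For correctness I would show that the uncorrupted maximum $x_m$ is never discarded and never evicted, so $x_m\in C$ at termination. Both facts follow from one observation: an element that has recorded losses to $k+1$ distinct elements cannot be $x_m$, since at most $k$ of those are corrupted, so at least one is an uncorrupted element that genuinely beats it, contradicting maximality; and conversely $x_m$, being uncorrupted, is beaten only by corrupted elements, so it accumulates at most $k$ recorded superiors ever. Hence $x_m$ is never discarded (a discard requires $\ge k+1$ recorded superiors) and is never the evicted element (an evicted element has $\ge k+1$ recorded superiors, while $x_m$ has $\le k$). The one thing left to verify is that the eviction is \emph{always possible}, i.e.\ that whenever $|C\cup\{z\}|=2k+2$ some element of it really has recorded in-degree $\ge k+1$; this is exactly where the complete-tournament invariant is used, since the recorded in-degrees over these $2k+2$ vertices sum to $\binom{2k+2}{2}=(k+1)(2k+1)>k(2k+2)$, so by averaging one of them is at least $k+1$.

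The step I expect to be the crux is precisely this last one: the algorithm is cheap only because it never re-queries a pair, so the correctness argument must extract an evictable (provably non-maximum) candidate purely from comparisons already made, and the reason this works is that we are careful to keep the induced tournament on the \emph{small} set $C$ complete — that is what powers the in-degree pigeonhole. The remaining care is routine: checking that both discarding and eviction are safe for \emph{every} valid instance (handled by the ``$k+1$ distinct recorded superiors $\Rightarrow$ not $x_m$'' argument, which uses only $|{\rm corrupted}|\le k$ and the acyclicity of the uncorrupted part), and bookkeeping the query count.
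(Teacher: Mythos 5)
Your proposal is essentially identical to the paper's proof: both maintain a candidate set of size $2k+1$, compare each newly-seen element against all current candidates (caching results), and evict a candidate with at least $k+1$ recorded losses whenever the set overflows, with the pigeonhole/in-degree-sum argument guaranteeing such a candidate exists and the ``the maximum loses only to corrupted elements, hence at most $k$ of them'' argument guaranteeing the maximum survives. The query accounting is the same and gives $(2k+1)(n-k-1)$.
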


At a high level, our deterministic algorithm maintains a set of size $2k+1$, while making sure the maximum element is inside this set during the whole process. Formally, our algorithm is described in~\Cref{alg:det}.

\begin{algorithm}[h!]
\caption{$\DetMaxFind(X = \{x_1, x_2, \dots, x_n\})$}
\label{alg:det}
\begin{algorithmic}[1]
    \State Let $S$ be $\emptyset$ initially.
    \For {$i = 1 \to n$}
        \State Compare $x_i$ with all the elements in $S$ and cache the results of these pairwise comparisons.
        \State $S \leftarrow S\cup \{x_i\}$.
        \If {$|S|>2k+1$}
            \State Let $\Bar{x}$ be some element in $S$ which is smaller than at least $k+1$ other elements in $S$.
            \State $S \leftarrow S\setminus \{\Bar{x}\}$.
        \EndIf
    \EndFor
\State \Return $S$.
\end{algorithmic}
\end{algorithm}

Observe that when we enter line 6 of~\Cref{alg:det}, there are exactly $2k+2$ elements in $S$. Thus by pigeonhole principle, we can always find an element that is smaller than at least $k+1$ other elements in $S$. Furthermore, the algorithm outputs a set $S$ of size $2k+1$, which fulfills our restriction. Now we will prove the maximum is contained in this output set.

\begin{lemma}
\label{lem:detoutputmax}
The output set of \Cref{alg:det} contains the maximum element.
\end{lemma}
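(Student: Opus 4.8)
The plan is to follow the uncorrupted maximum $x_m$ through the execution of \Cref{alg:det} and show that, once it enters $S$ at iteration $i = m$, it is never the element $\bar x$ removed on line~7. Since the loop eventually processes every index, $x_m$ is inserted into $S$ at iteration $m$ (line~4), so it suffices to establish the invariant: whenever line~6 is executed, if $x_m \in S$ then $x_m$ is smaller than at most $k$ other elements of $S$, and hence cannot be selected as $\bar x$. Together with a trivial induction on the iteration index, this gives $x_m \in S$ at termination, which is exactly the claim.

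To prove the invariant, note that whenever line~6 runs we have just inserted an element, so $|S| = 2k+2$. Suppose $x_m \in S$ at this moment and look at the $2k+1$ remaining elements of $S$. At most $k$ of them are corrupted, so at least $k+1$ are uncorrupted. For each such uncorrupted $x_j \in S \setminus \{x_m\}$, the edge between $x_m$ and $x_j$ lies in the complete DAG on the uncorrupted elements, and since $x_m$ is the uncorrupted maximum it is directed $x_j \to x_m$, i.e.\ $x_m$ is larger than $x_j$; moreover the cached comparison result coincides with this true direction, because comparisons among uncorrupted elements are consistent. Hence, according to the cached results, $x_m$ is larger than at least $k+1$ elements of $S$, so it is smaller than at most $(2k+1)-(k+1)=k$ of them. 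Thus $x_m$ does not meet the removal criterion on line~6, so $\bar x \neq x_m$.

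The only point requiring a little care — and the closest thing to an obstacle — is ensuring the argument relies solely on comparisons whose outcome is forced, namely the uncorrupted-versus-uncorrupted ones, rather than on any comparison touching a corrupted element, whose cached value is adversarial and could falsely claim $x_m$ is small. The counting above is arranged precisely so that the at most $k$ "bad" comparisons incident to $x_m$ are absorbed harmlessly: even if every corrupted element in $S$ were recorded as larger than $x_m$, that still leaves $x_m$ strictly above the $k+1$ threshold needed to survive. Everything else we need (the existence of a valid $\bar x$ on line~6, and $|S| = 2k+1$ at output) was already observed in the discussion preceding the lemma.
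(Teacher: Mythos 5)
Your proof is correct and takes essentially the same approach as the paper: the maximum can be beaten only by corrupted elements, hence is smaller than at most $k$ others in $S$, so it never meets the removal threshold of $k+1$ on line~6. You spell out the counting and the reliance on cached uncorrupted-vs-uncorrupted comparisons in more detail, but the underlying argument is the same.
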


\begin{proof}[Proof of~\Cref{lem:detoutputmax}]
The maximum element can only be smaller than at most $k$ corrupted elements. In line 6 of \Cref{alg:det}, the element we remove from $S$ is at least smaller than $k+1$ other elements. Thus the maximum element will never be removed from $S$.
\end{proof}

Lastly, we will show the number of queries we make as claimed.

\begin{theorem}
\label{t:detquerynum}
\Cref{alg:det} makes $(n-(k+1))(2k+1)=2nk+(n-2k^2-3k-1)$ queries.
\end{theorem}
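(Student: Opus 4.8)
The plan is a direct accounting of the comparisons performed on line 3 across all $n$ iterations of the for-loop, organized by tracking the size of $S$ at the start of each iteration. Since every comparison query in the algorithm occurs on line 3, and at iteration $i$ exactly $|S|$ such queries are made (where $|S|$ is the size of $S$ \emph{before} $x_i$ is inserted on line 4), it suffices to determine $|S|$ at the top of each iteration and sum.

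First I would establish the invariant that after iteration $i$ completes, $|S| = \min\{i, 2k+1\}$, proved by induction on $i$. For $i \le 2k+1$ the set only grows: after the insertion on line 4 we have $|S| \le 2k+1$, so the condition on line 5 is never triggered and nothing is removed. For $i \ge 2k+2$, the size becomes $2k+2$ on line 4 and is restored to $2k+1$ by the removal on lines 6--7; this removal is always well-defined by the pigeonhole observation stated just before \Cref{lem:detoutputmax} (among $2k+2$ elements some element is smaller than at least $k+1$ others). Consequently the number of comparisons made in iteration $i$ is the pre-insertion size of $S$, namely $\min\{i-1,\,2k+1\}$.

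Summing over all iterations, the total number of queries is
\[
\sum_{i=1}^{2k+1} (i-1) \;+\; \sum_{i=2k+2}^{n} (2k+1)
\;=\; \sum_{j=0}^{2k} j \;+\; \bigl(n-(2k+1)\bigr)(2k+1)
\;=\; k(2k+1) + \bigl(n-2k-1\bigr)(2k+1).
\]
Factoring out $(2k+1)$ gives $(2k+1)\bigl(k + n - 2k - 1\bigr) = (n-(k+1))(2k+1)$, and expanding yields $2nk + (n - 2k^2 - 3k - 1)$, matching the claimed bound. (Under the standing assumption $n > 2k+1$ the second sum is nonempty; in the degenerate regime $n \le 2k+1$ it would simply be empty, but that case has already been set aside.)

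There is no genuine obstacle here — the argument is pure bookkeeping. The only points requiring care are the off-by-one in the arithmetic series (iteration $i$ contributes $i-1$, not $i$, comparisons, because the comparison precedes the insertion of $x_i$) and the boundary iteration $i = 2k+2$, where the size of $S$ first hits its cap and the removal step begins to fire; handling these correctly is exactly what the stated size invariant does.
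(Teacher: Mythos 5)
Your proof is correct and follows essentially the same route as the paper: both track $|S|$ at the top of each iteration to conclude that iteration $i$ contributes $\min\{i-1, 2k+1\}$ comparisons, then sum. The only (minor) difference is that you make the size invariant and its induction explicit, while the paper also spends a sentence noting that the cache from line 3 means line 6 costs no queries — a point you absorb implicitly into the opening observation that all comparisons occur on line 3.
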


\begin{proof}[Proof of~\Cref{t:detquerynum}]
We only perform pairwise comparisons in line 3. When $i\leq 2k+1$, since the size of $S$ is $i-1$, we do $i-1$ queries. When $i\geq 2k+2$, we maintain a set $S$ with size $2k+1$, thus we do $2k+1$ each.

Additionally, line 3 ensures that at all times, we have the pairwise comparison results of all pairs of elements in $S$, so line 6 does not require any extra query as we can use the stored cache to figure out $\bar{x}$.

Therefore, the total number queries is $1+2+\ldots+2k+(2k+1)(n-(2k+1))=(n-(k+1))(2k+1)=2nk+(n-2k^2-3k-1)$.
\end{proof}

\section{Randomized Algorithm} \label{sec:random}

In this section, we provide a lower bound of $\Omega(n)$ and an upper bound of $O(n + k \polylog k)$, under some mild assumptions on the bound of $k$. These bounds are almost matching in practice due to the assumption of $n \gg k$.

\subsection{Lower bound}

We first show that under a mild assumption, any randomized algorithm that wants to achieve a vanishing probability of containing the maximum in its output set must use $\Omega(n)$ comparison queries. Intuitively, this makes sense as any algorithm should use so much queries to gain information about most of the elements in the first place.

\begin{theorem}\label{thm:randomized-lb}
    Assuming $2k + 1 \le \frac{n}{4}$, then no algorithms can output a set of $2k + 1$ elements containing the maximum with probability more than $3/4$ using $\frac{n}{4}$ comparison queries.    
\end{theorem}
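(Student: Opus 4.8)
The plan is to use Yao's minimax principle: I will exhibit a distribution over input instances such that any \emph{deterministic} algorithm making at most $n/4$ queries fails to contain the maximum with probability at least $1/4$ over this distribution. Since a randomized algorithm is a distribution over deterministic ones, this suffices. The hard constraint to respect is that the instances in the support must all be \emph{valid} (the $n-k$ uncorrupted elements form a DAG, and only the $k$ corrupted elements may have inconsistent/cyclic incident edges), and the adversary's difficulty must come from the algorithm not having enough budget to ``touch'' the true maximum.

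Here is the construction I would use. Fix the underlying comparison graph $G$ on $x_1,\dots,x_n$ to be the transitive tournament $x_i \to x_j$ for all $i < j$ (so $x_n$ looks like the top). Now pick a uniformly random element $x_M$ among a pool of $n/2$ ``candidate'' elements (say $x_{n/2+1},\dots,x_n$), and designate $x_M$ as the uncorrupted maximum; choose the $k$ corrupted elements to be $k$ arbitrary indices disjoint from a small neighborhood of $x_M$ that the algorithm could plausibly query — more carefully, I will let the corrupted set be determined \emph{after} seeing which elements the algorithm has queried, mimicking the adversary argument in the proof of \Cref{t:detlb}. The key point: to certify that $x_M$ is the maximum (equivalently, to distinguish $x_M$ from the element that $G$ ``claims'' is on top), the algorithm must compare $x_M$ against the $\ge k+1$ elements that $G$ reports as larger than $x_M$; if it never queries any edge incident to $x_M$, then from the algorithm's transcript, $x_M$ is indistinguishable from every other untouched candidate, and in particular indistinguishable from the element $G$ crowns as maximum.

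The counting step is then a pigeonhole/averaging argument. An algorithm with budget $n/4$ queries touches at most $n/2$ distinct elements (each query touches two). Since there are $n/2$ candidates for $x_M$ and $x_M$ is uniform among them, in expectation at least half of the candidates are never the endpoint of any query; more precisely, at least $n/2 - n/2 = 0$ in the worst case — so I need to be slightly more careful with the budget. With $n/4$ queries one touches at most $n/2$ elements, which could cover all $n/2$ candidates, so I would instead draw $x_M$ uniformly from the full set $\{x_1,\dots,x_n\}$ (or restrict candidates to a set of size $n$, using the hypothesis $2k+1 \le n/4$ only to guarantee the output set of size $2k+1$ is much smaller than the candidate pool): with $n/4$ queries, at most $n/2$ of the $n$ elements are touched, so a uniformly random $x_M$ is untouched with probability $\ge 1/2$. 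Conditioned on $x_M$ untouched, the transcript is independent of which untouched element is the true maximum, so the output set $S$ (of size $2k+1 \le n/4$) contains $x_M$ with conditional probability at most $\frac{2k+1}{\#\text{untouched}} \le \frac{n/4}{n/2} = \frac12$. Hence the overall failure probability is at least $\frac12 \cdot \frac12 = \frac14$, giving success probability at most $3/4$, as claimed.

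The main obstacle I anticipate is making the adversary's choice of the corrupted set consistent and legal while $x_M$ stays untouched: I must ensure that after fixing $x_M$ as the real maximum, the modified edges (reversing $x_M \to x_i$ to $x_i \to x_M$ for uncorrupted $x_i$ above $x_M$) involve only corrupted partners, exactly as in the proof of \Cref{t:detlb}, and that this reversal is never exposed by the $\le n/4$ queries — which holds precisely because those queries don't touch $x_M$. A secondary subtlety is the conditioning argument: I should argue that, conditioned on the (fixed) transcript and on $x_M$ being untouched, the posterior over which untouched element is the maximum is uniform, so that no output set of size $2k+1$ can hit it with probability exceeding $(2k+1)/(\text{number of untouched elements})$; this requires the prior over $x_M$ to be exchangeable over untouched coordinates, which the uniform prior guarantees.
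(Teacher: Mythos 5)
Your overall strategy—Yao's principle, split on whether the maximum $x_M$ is ``touched'' by a query, and show that an untouched $x_M$ lands in the size-$(2k+1)$ output with probability at most $1/2$—is conceptually the same decomposition the paper uses, and your final arithmetic ($\le \frac12$ touched $+\frac12\cdot\frac12$ untouched-but-hit $=\frac34$) exactly reproduces the paper's bound $\E[|S\cap C|]\le \frac34(2k+1)$. However, there are two genuine gaps, both tied to your choice of hard instance.

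\emph{The construction does not support exchangeability.} You take the transitive tournament $x_1\to\dots\to x_n$ as the base and want $x_M$ uniform over a pool of size $\Theta(n)$, patching up the $k$ corrupted elements afterwards. But with only $k$ corrupted elements available, the only legal placements of the true maximum in a transitive tournament are $x_{n-k},\dots,x_n$: if $M<n-k$, more than $k$ uncorrupted elements beat $x_M$, so $x_M$ cannot be the uncorrupted maximum no matter how the edges incident to $x_M$ are redirected. You cannot place $x_M$ uniformly over $n$ (or $n/2$) positions here, so ``untouched with probability $\ge 1/2$'' is vacuous. Moreover, even among the $k+1$ legal candidates, the transitive structure is \emph{not} symmetric, so the posterior on $x_M$ conditioned on a transcript is not uniform over untouched elements. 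Your ``let the corrupted set be determined after seeing which elements the algorithm has queried'' is an adaptive-adversary move; for a Yao argument you must fix the distribution before seeing the algorithm's coin flips, and the transitive tournament cannot be made to carry a large symmetric candidate pool. The paper instead builds the hard instance by taking the $(2k+1)$-cycle from \Cref{lem:setsizelb} (each critical element larger than the next $k$ and smaller than the previous $k$, all critical elements beating every non-critical element) and uniformly permuting labels; this gives a pool of $2k+1$ genuinely indistinguishable candidates in uniformly random positions.

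\emph{``At most $n/2$ elements are touched, so uniform $x_M$ is untouched w.p.\ $\ge 1/2$'' is not automatic.} The queried set $T$ is produced adaptively by an algorithm that sees answers, and those answers depend on the instance and hence potentially on $x_M$. You cannot treat $T$ as fixed in advance and then draw $x_M$ independently. The paper closes this gap with a martingale argument (\Cref{lem:tech-lem-2-randomized}): conditioned on the queries made so far, every unseen label is equally likely to be critical (because the instance is a uniformly random relabeling of a symmetric structure), which shows $\E[|C\cap T|]\le \frac{2k+1}{2}$. Once you adopt the cyclic instance, that martingale step is exactly the rigorous version of your counting intuition, and then your ``untouched $\Rightarrow$ hit with probability $\le (2k+1)/(\#\text{untouched}) \le 1/2$'' step becomes the paper's \Cref{lem:tech-lem-1-randomized}. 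So the skeleton of your argument is right; what's missing is a construction that actually makes the exchangeability and independence claims true, and that construction is forced to be the symmetric cycle rather than the transitive tournament.
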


The assumption above is relatively mild, as 1. in practice we should expect $n \gg k$, and 2. otherwise, one can just output a random subset of the elements and achieve a constant probability of including the maximum.

Let us first describe the hard instance for~\Cref{thm:randomized-lb}. We consider the instance from~\Cref{lem:setsizelb} where there is a cycle of $2k + 1$ elements of the top $k + 1$ uncorrupted and $k$ corrupted elements, then we shuffle the indices of all $n$ elements uniformly randomly. Let us call the $2k + 1$ elements in the cycle as \emph{critical elements}, denoted by the set $C$.

Without loss of generality, fix any randomized algorithm $\textsc{Alg}$ that uses exactly $\frac{n}{4}$ queries (as the algorithm can repeat queries as necessary). We remark that since no algorithms can distinguish elements in $C$, combined with the random indexing, the probability that any algorithm produces a set containing the maximum is exactly $\E\left[\frac{|S \cap C|}{|C|}\right] = \E\left[\frac{|S \cap C|}{2k + 1}\right]$ where $S$ is the output set of the algorithm. Our task is reduced to upper bounding $\E[|S \cap C|]$.

Our first technical lemma shows a relationship between $\E[|S \cap C|]$ and $\E[|T \cap C|]$, where $T$ is the random set of elements that are included in at least one pairwise comparison that $\textsc{Alg}$ made; note that $|T| \le \frac{n}{2}$.

\begin{lemma}\label{lem:tech-lem-1-randomized}
    $\E[|S \cap C|] \le \frac{\E[|C \cap T|]}{2} + \frac{2k + 1}{2}$.
\end{lemma}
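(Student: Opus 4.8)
The plan is to exploit the symmetry of the critical elements together with the randomness of the indexing, conditioning on the query set $T$. The key observation is that $S$ is determined by the transcript of the algorithm, which only involves elements in $T$; elements outside $T$ are completely indistinguishable from one another, so the algorithm has no information to preferentially include or exclude them beyond a uniformly random choice among equivalent candidates. I would set up the argument by conditioning on the realized set $T$ and on the partition of $T$ into critical and non-critical elements — but crucially, conditioned on $|C \cap T|$ being some value $t$, the algorithm still cannot tell which specific elements of $T$ are in $C$ versus which are top uncorrupted elements versus corrupted, because all $2k+1$ critical elements are mutually symmetric in the comparison graph.

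Concretely, I would argue as follows. First, split $S$ as $S = (S \cap T) \cup (S \setminus T)$, so $|S \cap C| = |S \cap C \cap T| + |S \cap C \setminus T|$. For the first term, bound it trivially: $|S \cap C \cap T| \le |C \cap T|$, contributing $\E[|C \cap T|]$ — but this is too lossy by a factor of $2$, so I would instead use that conditioned on the transcript, the $t := |C \cap T|$ critical elements inside $T$ are a uniformly random $t$-subset of the critical elements, and the algorithm (whose output $S$ is fixed given the transcript) captures at most $|S \cap T|$ of them; by symmetry $\E[|S \cap C \cap T| \mid \text{transcript}] = |S \cap T| \cdot \frac{t}{|T|}$... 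Actually the cleaner route: among the critical elements, those in $T$ are symmetric to each other, and those not in $T$ are symmetric to each other and to all other non-touched elements. So $S$'s intersection with the ``touched critical'' elements is at most $|C \cap T|$, and $S$'s intersection with the ``untouched'' part behaves like a uniform sample. The $\frac{2k+1}{2}$ slack term should come from the fact that $S$ has size exactly $2k+1$, so $|S \setminus T| \le 2k+1$, and the untouched critical elements are a uniformly random fraction of the untouched elements.

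Let me restate the intended decomposition more carefully. Write $C_1 = C \cap T$ and $C_2 = C \setminus T$, with $|C_1| + |C_2| = 2k+1$. Conditioned on the transcript, $S$ is fixed. For the elements in $C_2$: these are indistinguishable (given the transcript) from all elements of $X \setminus T$, and there are $|X \setminus T| \ge n/2$ of them since $|T| \le n/2$; by the uniform random indexing, $\E[|S \cap C_2| \mid \text{transcript}] = |S \setminus T| \cdot \frac{|C_2|}{|X \setminus T|} \le |S| \cdot \frac{|C_2|}{n/2} \le (2k+1) \cdot \frac{2k+1}{n/2}$, which is small — but we want $\frac{2k+1}{2}$, and indeed using $|C_2| \le 2k+1 \le n/4$ gives $\E[|S \cap C_2|] \le (2k+1)\cdot\frac{n/4}{n/2} = \frac{2k+1}{2}$. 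For the elements in $C_1$: trivially $|S \cap C_1| \le |C_1| = |C \cap T|$, but this gives a factor $1$ not $\frac12$; so here I must again use symmetry within $T$ — conditioned on the transcript and on which elements of $T$ were ``involved how'', the critical elements in $T$ are still exchangeable with the non-critical elements of $T$ that played symmetric roles. Hmm, this is the delicate point.

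I expect the main obstacle to be precisely this $C_1$ term and getting the factor $\frac{1}{2}$ rather than $1$. The resolution should be: the critical elements are symmetric to each other (all $2k+1$ of them), so conditioned on $|C \cap T| = t$, the identity of which $t$ elements of $T$ are critical is uniform over all $\binom{\text{stuff}}{t}$ choices among the ``candidate'' positions, and since $S \cap T$ is fixed of some size, $\E[|S \cap C \cap T|] = \E\big[|S \cap T| \cdot \frac{|C \cap T|}{|\text{candidates in }T|}\big]$. The bound $\frac{|C \cap T|}{|\text{candidates}|} \le \frac12$ would follow if there are always at least $2|C \cap T|$ symmetric candidate positions — which needs an argument, perhaps that the top uncorrupted elements below the critical cycle, or simply the structure, guarantees enough symmetric decoys. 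I would flesh this out by carefully describing the equivalence classes the transcript induces on $T$ and invoking that within each class the assignment to $C$ is uniform, then bounding $\sum_{\text{classes}} |S \cap \text{class}| \cdot \frac{|C \cap \text{class}|}{|\text{class}|} \le \frac{1}{2}\sum |S \cap \text{class}| = \frac12 |S \cap T| \le \frac12 |C \cap T|$ by using $|C \cap \text{class}| \le \frac12 |\text{class}|$, which holds because each comparison involves two elements and the indistinguishability forces classes to be large. Combining the two bounds yields $\E[|S \cap C|] \le \frac{\E[|C\cap T|]}{2} + \frac{2k+1}{2}$ as claimed.
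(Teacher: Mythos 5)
Your decomposition $|S\cap C| = |S\cap C\cap T| + |S\cap C\setminus T|$ and your treatment of the ``untouched'' term are exactly the paper's; the bound $\E[|S\cap C\setminus T|] = \E\bigl[|S\setminus T|\cdot\tfrac{|C\setminus T|}{|X\setminus T|}\bigr]$ is the right starting point. The gap is in how you apply $2k+1\le n/4$. You use it to bound $|C\setminus T| \le n/4$, getting $\E[|S\cap C\setminus T|] \le \tfrac{2k+1}{2}$, and this leaves you needing $\E[|S\cap C\cap T|]\le \tfrac12\E[|C\cap T|]$ — which you correctly flag as the delicate point. But that bound is not true in general: the critical elements are mutually symmetric, but they are \emph{not} indistinguishable from the non-critical elements of $T$. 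For instance, the algorithm can detect cycles among touched critical elements (the uncorrupted ones form a DAG, so any observed cycle certifies criticality) and then place every such identified critical element into $S$, making $|S\cap C\cap T| = |C\cap T|$. Your ``each class has $|C\cap\text{class}|\le\tfrac12|\text{class}|$'' claim therefore fails; no symmetry-within-$T$ argument can rescue the factor of $2$.

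The paper avoids this entirely by applying the same inequality the other way: use $|S|=2k+1\le n/4$ to get $\E[|S\cap C\setminus T|]\le \tfrac{(2k+1)|C\setminus T|}{n/2}\le \tfrac12\E[|C\setminus T|]$, keeping the $|C\setminus T|$ dependence instead of collapsing it to $2k+1$. Then the trivial bound $|S\cap C\cap T|\le|C\cap T|$ suffices, because
\[
\E[|C\cap T|] + \tfrac12\E[|C\setminus T|]
= \E[|C\cap T|] + \tfrac12\bigl((2k+1)-\E[|C\cap T|]\bigr)
= \tfrac12\E[|C\cap T|] + \tfrac{2k+1}{2},
\]
where the cancellation from $|C\cap T|+|C\setminus T|=2k+1$ produces the factor $1/2$ for free. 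The savings you were hunting for on the $C\cap T$ term live in the $C\setminus T$ term, not in any structural property of the algorithm's behavior on $T$.
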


\begin{proof}[Proof of~\Cref{lem:tech-lem-1-randomized}]
First note that
\begin{equation}\label{eq:lemma-9-eq}
    \E[|S \cap C|] = \E[|(S \cap C) \cap T|] + \E[|(S \cap C) \setminus T|] \le \E[|C \cap T|] + \E[|(S \cap C) \setminus T|]
\end{equation}

We focus on the quantity $(S \cap C) \setminus T$: this is the set of outputted critical elements that the algorithm does not observe. Conditioned on $T$, we note that elements $j \notin T$ have equal probability of being in $C$ since the algorithm cannot distinguish between these elements, and this probability is exactly $\frac{|C \setminus T|}{n - |T|}$. Therefore,
\[\E[(S \cap C) \setminus T] = \E\left[|S \setminus T| \cdot \frac{|C \setminus T|}{n - |T|}\right] \le \E\left[|C \setminus T| \cdot \frac{|S|}{n - n/2}\right] = \E\left[|C \setminus T| \cdot \frac{2k + 1}{n/2}\right] \le \frac{\E[|C \setminus T|]}{2}\]
where we use $2k + 1 \le \frac{n}{4}$ by assumption.

Plugging this back into~\Cref{eq:lemma-9-eq}, we have
\[\E[|S \cap C|] \le \E[|C \cap T|] + \frac{\E[|C \setminus T|]}{2} = \frac{\E[|C \cap T|]}{2} + \frac{\E[|C|]}{2} = \frac{\E[|C \cap T|]}{2} + \frac{2k + 1}{2}.\]
\end{proof}

The final step is to bound $\E[|C \cap T|]$. Intuitively, since $T$ is a subset of size at most $n/2$, and the instance is randomly shuffled, we should expect to see $\E[|C \cap T|] \le \frac{2k + 1}{2}$ no matter how the algorithm proceeds. The following lemma captures this intuition.

\begin{lemma}\label{lem:tech-lem-2-randomized}
$\E[|C \cap T|] \le \frac{2k + 1}{2}$.
\end{lemma}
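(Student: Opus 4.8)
The plan is to analyze the process query-by-query and track how much the set $T$ can grow into $C$. Since the instance is a uniformly random shuffle of a fixed configuration, the identity of the $2k+1$ critical elements is a uniformly random $(2k+1)$-subset of $[n]$, and crucially this randomness is independent of everything the algorithm can observe — because no algorithm can distinguish critical elements from each other, and (in the hard instance) the critical elements as a block are symmetric with the non-critical elements in a way that reveals nothing. So at any point in the execution, conditioned on the history so far and on the current set $T$ of touched elements, each element not yet in $T$ is equally likely to be critical, with conditional probability $\frac{|C \setminus T|}{n - |T|} \le \frac{|C|}{n - |T|}$.

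The key step is to bound the expected increment to $|C \cap T|$ at each query. When the algorithm issues its $t$-th comparison, it names two elements; in the worst case both are fresh (not previously in $T$), so $|T|$ grows by at most $2$, from some value at most $2(t-1)$ to some value at most $2t$. Each of the (at most two) fresh elements lands in $C$ with conditional probability at most $\frac{|C|}{n - |T|} \le \frac{2k+1}{n - 2t}$. Hence the expected increase in $|C \cap T|$ at step $t$ is at most $\frac{2(2k+1)}{n - 2t}$. Summing over $t = 1, \dots, n/4$ and using $\E[|C \cap T|] = \sum_t \E[\text{increment at step } t]$ by linearity, we get
\[
\E[|C \cap T|] \;\le\; \sum_{t=1}^{n/4} \frac{2(2k+1)}{n - 2t}.
\]
Since $t \le n/4$ implies $n - 2t \ge n/2$, each summand is at most $\frac{2(2k+1)}{n/2} = \frac{4(2k+1)}{n}$, so the whole sum is at most $\frac{n}{4} \cdot \frac{4(2k+1)}{n} = 2k+1$ — but this is off by a factor of $2$ from what we want. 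To close that gap, I would be more careful: keep the bound $\frac{2(2k+1)}{n - 2t}$ and bound the sum by the integral $\int_0^{n/4} \frac{2(2k+1)}{n - 2s}\,ds = (2k+1)\ln\frac{n}{n - n/2} = (2k+1)\ln 2 < (2k+1) \cdot \tfrac{3}{4} < \frac{2k+1}{2} \cdot \tfrac{3}{2}$; that still isn't quite $\frac{2k+1}{2}$. A cleaner route that does give the stated constant: observe that $T$ is built over exactly $n/4$ queries and $|T| \le n/2$, so one can instead argue directly that for a uniformly random $(2k+1)$-subset $C$ and \emph{any fixed} set $T$ of size at most $n/2$ (the worst case), $\E_C[|C \cap T|] = (2k+1)\frac{|T|}{n} \le \frac{2k+1}{2}$ — the subtlety is only that $T$ is not fixed but adaptively chosen, which the symmetry/independence argument above handles. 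So the correct final step is: condition on the realized set $T$, use independence of $C$ from the algorithm's view to conclude $\E[|C \cap T| \mid T] = (2k+1)\frac{|T|}{n}$, and then take $|T| \le n/2$.

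The main obstacle is making the independence claim fully rigorous: one must argue that the algorithm's transcript (the sequence of queried pairs and the black-box answers it receives) carries \emph{no} information about which elements are critical, so that $C$ remains uniform on $(2k+1)$-subsets conditioned on the transcript and hence on $T$. This follows from the symmetry of the hard instance — the comparison graph is invariant under the relabeling that maps the critical block to any other set of $2k+1$ positions while preserving all observed edge directions — but spelling out the coupling/relabeling argument cleanly is the delicate part. Once that is in hand, $\E[|C \cap T|] = \E\big[(2k+1)\tfrac{|T|}{n}\big] \le (2k+1)\cdot\tfrac{1}{2} = \tfrac{2k+1}{2}$, which is the claim. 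Finally, combining with~\Cref{lem:tech-lem-1-randomized} gives $\E[|S \cap C|] \le \frac{2k+1}{4} + \frac{2k+1}{2} = \frac{3(2k+1)}{4}$, so the success probability $\E\big[\frac{|S \cap C|}{2k+1}\big] \le \frac{3}{4}$, completing the proof of~\Cref{thm:randomized-lb}.
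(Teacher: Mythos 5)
The key claim in your proposed final route --- that the transcript carries \emph{no} information about which elements are critical, so $C$ stays uniformly distributed over $(2k+1)$-subsets conditioned on the transcript and hence on $T$ --- is false, and this is a genuine gap. The transcript does leak information about the elements the algorithm has touched: for instance, if the algorithm observes a directed cycle among queried elements, those elements must all be critical; more generally, observing $x_i > x_j$ shifts the posterior on $C$, because in the hard instance a non-critical element never beats a critical one (so this observation rules out ``$i \notin C$, $j \in C$''). Consequently the relabeling you gesture at (remapping the critical block to a different set of $2k+1$ positions while preserving all observed edge directions) is not available, and the identity $\E[|C \cap T| \mid T] = (2k+1)|T|/n$ does not hold for adaptive algorithms. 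Your first two attempts are sound as far as they go, and you correctly noticed that neither the crude sum ($\le 2k+1$) nor the integral ($\le (2k+1)\ln 2$) reaches the target $\tfrac{2k+1}{2}$.

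What is true, and what the paper exploits, is the weaker statement: conditioned on the transcript, every element \emph{outside} $T$ is equally likely to be critical --- the observed comparisons constrain only elements they touch. The paper formalizes this as a martingale $Y_i = |C \setminus T_i| / |A \setminus T_i|$, the fraction of critical elements among unseen ones after $i$ queries. By the argument above, $\E[Y_{i+1} \mid \text{history}_i] = Y_i$, so by optional stopping $\E[Y_{n/4}] = Y_0 = (2k+1)/n$. Then since $|T| \le n/2$ forces $n - |T| \ge n/2$, we get $(2k+1)/n = \E[|C\setminus T|/(n-|T|)] \le \tfrac{2}{n}\E[|C\setminus T|]$, hence $\E[|C \setminus T|] \ge \tfrac{2k+1}{2}$ and so $\E[|C \cap T|] = (2k+1) - \E[|C\setminus T|] \le \tfrac{2k+1}{2}$. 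The martingale automatically accounts for the ``without replacement'' effect --- once some critical elements have been collected into $T$, the unseen pool is correspondingly thinner in critical elements --- which is exactly the savings your additive bound $\tfrac{|C|}{n - |T|}$ (and hence the integral $(2k+1)\ln 2$) fails to capture.
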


\begin{proof}[Proof of~\Cref{lem:tech-lem-2-randomized}]
    Let $A$ be the set of all $n$ elements, and let $T_i$ be the set of elements seen from the first $i$ queries of $\textsc{Alg}$, where $0 \le i \le \frac{n}{4}$. Define the random variable $Y_i = \frac{|C \setminus T_i|}{|A \setminus T_i|}$, i.e. the fraction of critical elements among the unseen elements after step $i$. Our goal is to show that $\{Y_i\}_{i=0}^{n/4}$ forms a martingale.

    First, we will show that if $\{Y_i\}$ forms a martingale, then we have our desired result. As $\E[Y_0] = \frac{2k + 1}{n}$, by optional stopping theorem, we also have $\E[Y_{n/4}] = \frac{2k + 1}{n}$, or $\E\left[\frac{|C \setminus T|}{|A \setminus T|}\right] = \frac{2k + 1}{n}$. Combined with $\frac{|C|}{|A|} = \frac{2k + 1}{n}$, we have $\E\left[\frac{|C \cap T|}{|T|}\right] = \frac{2k + 1}{n}$, which implies the lemma statement as $|T| \le n/2$.

    Finally, we show that $\{Y_i\}$ is a martingale. Conditioned on $T_i$, before the $i+1$-th query, since the algorithm has not seen elements in $A \setminus T_{i + 1}$ as $T_{i + 1} \supseteq T_i$, every element in $A \setminus T_{i + 1}$ has equal probability of being critical. In other words,
    \[
        \E[Y_{i + 1} \mid Y_0, Y_1, \dots, Y_i] = \E\left[\frac{|C \setminus T_{i + 1|}}{|A \setminus T_{i+1}|} \;\Big|\; T_i\right] = \frac{|C \setminus T_i|}{|A \setminus T_i|} = Y_i.
    \]
\end{proof}

Combining the two lemma, we have $\E[|S \cap C|] \le \frac{1}{2} \cdot \frac{2k + 1}{2} + \frac{2k + 1}{2} = \frac{3}{4} \cdot (2k + 1)$, implying~\Cref{thm:randomized-lb}.

\subsection{Upper bound}

Finally, we describe our randomized algorithm that contains the maximum element in its output set with vanishing probability with $O(n + k \polylog k)$ queries, almost matching the lower bound from~\Cref{thm:randomized-lb}. The main theorem for this section is as follows.

\begin{theorem} \label{thm:main-thm-randomized}
    For any $0 < c \le 1$, there exists a randomized algorithm such that with probability at least $1 - O(k^{-c} \log k)$, the algorithm's output set contains the maximum using at most $O(n + k^{1 + 3c} \log k)$ queries.
\end{theorem}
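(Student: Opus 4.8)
The plan is to build a two-stage algorithm $\PruneAndRank$, as the name suggests. The first stage is a cheap $O(n)$-query \emph{pruning} phase whose goal is to discard almost all elements while keeping, with high probability, the maximum together with a small candidate pool of size $\operatorname{poly}(k)$ (more precisely $\tilde O(k^{1+3c})$ or so). The second stage runs an exact deterministic routine on the surviving pool --- by \Cref{t:det}, computing the size-$(2k+1)$ output on a universe of size $N$ costs $(2+o(1))Nk$ queries, so if the pool has size $N = O(k^{1+3c})$ the second stage costs $O(k^{2+3c})$; re-parametrizing the exponent (replace $c$ by $c/3$ in the internal call, or just absorb constants) gives the claimed $O(n + k^{1+3c}\log k)$. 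So the real content is entirely in the pruning phase and its failure-probability analysis.

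For the pruning phase I would run a tournament-style procedure that is deliberately sloppy about correctness but cheap. Partition the $n$ elements into groups of size roughly $k$ (or a suitable $\operatorname{poly}(k)$ size); within each group play a single-elimination style comparison knockout (or take the element that won the most of a small random set of comparisons) to produce one ``local winner'' per group, at $O(n)$ total cost; then recurse on the $O(n/k)$ winners. The key structural fact is the one already used in \Cref{lem:detoutputmax}: the true maximum $x_m$ loses only to corrupted elements, of which there are at most $k$. So $x_m$ can only be eliminated from a group if that group contains one of the $\le k$ corrupted elements \emph{and} the adversary's edges happen to knock $x_m$ out. I would arrange the group sizes and the number of independent repetitions so that: (i) with probability $\ge 1 - O(k^{-c}\log k)$, no ``bad'' group (one containing a corrupted element) ever eliminates $x_m$ --- this is where the randomness over the partition and the $\log k$ levels of recursion enter, each level contributing a $k^{-c}$-type failure term and there being $O(\log k)$ levels; and (ii) at the end we are left with a candidate set of size $\tilde O(k^{1+3c})$: intuitively each of the $O(\log k)$ rounds can only be ``spoiled'' near a corrupted element, contaminating a $\operatorname{poly}(k)$-sized neighborhood, and with $k$ corrupted elements and the chosen parameters this totals $\tilde O(k^{1+3c})$ survivors that we cannot safely discard. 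We then feed exactly those survivors into $\DetMaxFind$.

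The main obstacle, and the step I expect to need the most care, is the tension in choosing the group size / repetition parameter $t$ (a power of $k^c$): larger $t$ drives the per-level failure probability down toward $k^{-c}$ but inflates the surviving candidate pool toward $k^{1+3c}$, and conversely. One has to verify that \emph{the same} parameter simultaneously yields $O(n)$ total comparisons in stage one, a pool of size $O(k^{1+3c})$, and a union-bounded failure probability $O(k^{-c}\log k)$ over all $O(\log k)$ levels and all $\le k$ corrupted elements. The analysis of (i) is the delicate part: conditioned on a corrupted element landing in $x_m$'s group, the adversary controls that element's comparisons, so one must argue that a random tournament bracket (or random choice of which comparisons to sample) makes it unlikely for a single adversarial element to propagate past $x_m$ --- essentially, $x_m$ survives a knockout unless an adversarial element is placed on its path, which happens with probability $\sim (\text{path length})/(\text{group size})$, and boosting by $t$ independent attempts gives the $k^{-c}$ bound. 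Getting the bookkeeping of these probabilities to telescope cleanly across recursion levels, while keeping the candidate count under control, is the crux; everything after ``hand the survivors to \Cref{t:det}'' is routine.
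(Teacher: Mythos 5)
Your overall two-stage ``prune then finish on a small pool'' framing matches the paper, and stage 1 in both cases is supposed to cut $X$ down to a $\operatorname{poly}(k)$-sized pool using $O(n)$ queries. (The paper's stage 1 is much simpler than your tournament: sample $O\!\left(\tfrac{n\log k}{k^{1+c}}\right)$ random elements, take the ``running max'' $t$ of that sample, and filter every element against $t$. Since all sampled elements are uncorrupted with probability $1-O(k^{-c}\log k)$, the filter $t$ is uncorrupted and therefore cannot kill the true maximum; and $t$ is, with high probability, among the top $\tfrac{1}{2}k^{1+c}$ uncorrupted elements, so at most $k^{1+c}$ elements survive. Your tournament idea has the same flavor but is substantially harder to make rigorous, because a corrupted element anywhere along $x_m$'s bracket can knock it out, and you'd have to control this over all $O(\log k)$ levels; this is not a deal-breaker, but it is far more delicate than what's needed.)

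The genuine gap is in your stage 2. You propose to hand the surviving pool of size $N$ to $\DetMaxFind$ (Theorem~\ref{t:det}), which necessarily costs $\Theta(Nk)$ comparisons; but that is a factor of roughly $k$ too many. Concretely, the theorem needs $O(n + k^{1+3c}\log k)$ queries, and the interesting regime is $c = \Theta\!\left(\tfrac{\log\log k}{\log k}\right)$ where $k^{1+3c}\log k = k\operatorname{polylog} k$. Running $\DetMaxFind$ on any pool of size $N \geq 2k+1$ already costs $\Omega(k^2)$, so no re-parametrization of $c$ can rescue this; your claim that replacing $c$ by $c/3$ ``or just absorbing constants'' gives the bound is off by a multiplicative $k/\!\operatorname{polylog} k$. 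The paper avoids this by making stage 2 \emph{also} randomized and sampling-based: for each surviving element $x_i$ it samples $O(k^{2c}\log k)$ random comparisons to estimate $x_i$'s rank, takes the $2k + k^{1-c}$ elements with smallest estimated rank (which contains $x_m$ with probability $1-1/k$ by a Hoeffding-plus-union-bound argument), and then outputs a uniformly random $(2k+1)$-subset of those, incurring one more $O(k^{-c})$ failure probability from the random subselection. That gives $O(k^{1+c}\cdot k^{2c}\log k) = O(k^{1+3c}\log k)$ queries for stage 2, whereas your deterministic finish cannot get below $\Omega(k^2)$.
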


By plugging in the appropriate value for $c$ (for example, a constant multiple of $\frac{\log \log k}{\log k}$), we can show that the algorithm uses at most $O(n + k \polylog k)$ queries to output the maximum with vanishing probability of $1 - O\left(\frac{1}{\polylog k}\right)$.

Our algorithm, parameterized by $c$ in~\Cref{thm:main-thm-randomized}, is characterized by two stages: the first stage prunes the number of elements in $X$ from $n$ to $k^{1+c}$ using $O(n)$ queries, and the second stage samples every element's rank and outputs a uniformly random subset among the top $2k + k^{1-c}$ ranking elements using $O(k^{1 + 3c} \log k)$ queries.

\begin{algorithm}[h!]
\caption{$\textsc{PruneAndRank}(X = \{x_1, x_2, \dots, x_n\}, c)$}
\label{alg:prune-and-rank}
\begin{algorithmic}[1]
    \LineComment{\emph{Stage 1:} Prune $|X|$ to $k^{1+c}$ elements}
    \State Let $t$ initially be $\varnothing$.
    \For {$k = 1 \to  \frac{2n \log k}{k^{1+c}}$}
        \State Let $x_j$ be a uniformly random element from $X$.
        \If {$t = \varnothing$ or $x_j > t$}
            \State $t \gets x_j$.
        \EndIf
    \EndFor
    \State Compare every element $x_i \in X$ against $t$, and remove those that are smaller than $t$.

    \LineComment{\emph{Stage 2:} Choose the top $2k + k^{1-c}$ sample ranks}
    \For {every element $x_i \in X$}
        \State Let $A_i$ be the a multiset of $3k^{2c} \log k$ elements from $X \setminus \{x_i\}$ chosen uniformly at random (with replacement).
        \State Compare $x_i$ against every element in $A_i$.
        \State Let $w_i$ be the number of elements larger than $i$ in $A_i$.
    \EndFor
    \State Let $T$ be $2k + k^{1-c}$ elements with the smallest $w_i$'s (ties are broken arbitrarily).
    \State Let $S$ be a uniformly random subset of $T$ of size $2k + 1$.
    \State \Return S.
\end{algorithmic}
\end{algorithm}

The rest of this section is dedicated to proving desirable properties of the two stages, from which~\Cref{thm:main-thm-randomized} follows immediately.

\paragraph{First Stage.}

Observe that the first stage uses at most $\frac{2n \log k}{k^{1+c}} + n \le 3n$ comparison queries. We now only need to show the following desirable property of stage 1.

\begin{lemma}
\label{lem:prune-and-rank-stage-1}
With probability at least $1 - O(k^{-c} \log k)$, the first stage reduces $X$ to a set of at most $k^{1+c}$ elements, among which is the maximum.
\end{lemma}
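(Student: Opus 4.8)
The plan is to show that, with the claimed probability, Stage~1 behaves exactly as it would on a purely uncorrupted instance: the ``running element'' $t$ ends up being a genuine uncorrupted element of moderately high rank, so that deleting everything below $t$ both retains the uncorrupted maximum and discards enough elements. Write $m := \frac{2n\log k}{k^{1+c}}$ for the number of loop iterations, and recall the sampled $x_j$'s are i.i.d.\ uniform on $X$. Set $L := k^{1+c}-k$, and define two ``bad'' events: $E_1 = \{\text{some sampled } x_j \text{ is corrupted}\}$ and $E_2 = \{\text{none of the } \lfloor L\rfloor \text{ largest uncorrupted elements is ever sampled}\}$. The first step is to observe that the conclusion of the lemma holds whenever neither $E_1$ nor $E_2$ occurs. (We may assume $k$ exceeds a suitable constant and $k^{1+c}\le n$, since otherwise the target failure probability $O(k^{-c}\log k)$ is $\ge 1$ and the statement is vacuous, or $|X|\le k^{1+c}$ and only the ``maximum survives'' claim is at issue, which the argument below still gives.)

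Next I would unpack what happens on $\overline{E_1}\cap\overline{E_2}$. On $\overline{E_1}$ every comparison performed inside the loop is between two uncorrupted elements, hence consistent with the true total order on the uncorrupted set, so the final $t$ is precisely the true maximum of the sampled (uncorrupted) elements. On $\overline{E_2}$ some element among the top $\lfloor L\rfloor$ uncorrupted elements was sampled, so $t$ itself lies among the top $\lfloor L\rfloor$ uncorrupted elements; in particular at most $L-1$ uncorrupted elements exceed $t$. Now look at the pruning step. Let $x_m$ be the uncorrupted maximum of the whole instance. Since $t$ is uncorrupted and $x_m$ is the largest uncorrupted element, $x_m \ge t$ in the true order, so the comparison of $x_m$ against $t$ is honest and does not delete $x_m$. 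For the size bound: the surviving uncorrupted elements are exactly the uncorrupted $x_i$ with $x_i \ge t$, i.e.\ $t$ together with the at most $L-1$ uncorrupted elements above it, at most $L$ in all; and at most $k$ corrupted elements survive since there are only $k$ of them. Hence at most $L+k = k^{1+c}$ elements remain, and $x_m$ is among them.

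The last step is to bound $\Pr[E_1]+\Pr[E_2]$. Each sample is corrupted with probability $k/n$, so a union bound over the $m$ iterations gives $\Pr[E_1]\le \frac{mk}{n} = \frac{2\log k}{k^c}$. For $E_2$, a single sample misses the $\lfloor L\rfloor$ target elements with probability $1 - \lfloor L\rfloor/n$, so $\Pr[E_2] = \bigl(1-\tfrac{\lfloor L\rfloor}{n}\bigr)^m \le \exp\!\bigl(-\tfrac{\lfloor L\rfloor m}{n}\bigr)$; since $\tfrac{Lm}{n} = 2\log k\,(1-k^{-c}) \ge \log k$ once $k^{c}\ge 2$, this is at most $k^{-1+o(1)} = O(k^{-c}\log k)$ for every $c\le 1$. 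Summing the two contributions gives $\Pr[E_1\cup E_2] = O(k^{-c}\log k)$, which completes the proof.

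I expect the first bad event to be the conceptual crux, and the reason the bound has this shape. Unlike an honest running-maximum scan, here a single corrupted element that happens to be sampled can be declared by the adversary to ``win'' every subsequent comparison, become $t$, and then in the pruning step be declared larger than the true maximum $x_m$, deleting it — so correctness really does rest on the union bound that \emph{no} corrupted element is sampled, and this is precisely what forces the sample budget $m\approx \frac{n\log k}{k^{1+c}}$ to be small compared to $n/k$. The event $E_2$ is a routine coupon-collector-type estimate, and the remaining bookkeeping (floors, the edge cases $k^{1+c}\ge n$ or $k$ below a constant) is straightforward.
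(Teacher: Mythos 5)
Your proof is correct and follows essentially the same two-step decomposition as the paper: first bound the probability that any sampled element is corrupted (which gives the dominant $O(k^{-c}\log k)$ term and guarantees the honest running maximum never evicts $x_m$), then bound the probability that the running maximum fails to land in a top slice of size roughly $k^{1+c}-k$ of the uncorrupted elements (a $O(1/k)$ term), and union-bound. The only differences from the paper are cosmetic choices of constants (you use $L=k^{1+c}-k$ where the paper uses $\tfrac{1}{2}k^{1+c}$) and slightly more explicit handling of edge cases.
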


\begin{proof}[Proof of~\Cref{lem:prune-and-rank-stage-1}]
We observe that if every $\frac{2n \log k}{k^{1+c}}$ elements sampled from stage 1 is an uncorrupted element, then the set $S$ at the end of stage 1 contains the uncorrupted maximum. This is because $t$ at the end of the iteration is a uncorrupted element, and hence it cannot eliminate the uncorrupted maximum from $X$. The probability of this event happening is at least \[\left(1 - \frac{k}{n}\right)^{2n \log k/k^{1+c}} \ge 1 - 2k^{-c} \log k.\]

Conditioned on the event that all sampled elements are uncorrupted, we note that by the end, $t$ is the largest sampled element. We first calculate probability that $t$ at the end is among the largest $\frac{1}{2} k^{1+c}$ uncorrupted elements (conditioned on all sampled elements being uncorrupted); this probability is at least
\[1 - \left(1 - \frac{k^{1+c}}{2(n - k)}\right)^{2n \log k/k^{1+c}} \ge 1 - \left(1 - \frac{k^{1+c}}{2n}\right)^{2n \log k/k^{1+c}} \ge 1 - \left(\frac{1}{e}\right)^{\log k} = 1 - \frac{1}{k}.\]

This probability also lower bounds the probability that $X$ contains at most $k^{1+c}$ elements conditioned on all sampled elements being uncorrupted, since if $t$ is among the largest $\frac{1}{2} k^{1+c}$ uncorrupted elements, then the set $X$ after pruning would have at most $\frac{1}{2} k^{1+c} + k \le k^{1+c}$ elements. A simple union bound then verifies the theorem.
\end{proof}

\paragraph{Second Stage.}

In this stage, we use $X$ to refer to to the set $X$ at the beginning of stage 2, which is assumed to have at most $k^{1+c}$ elements, among which is the maximum. With this assumption, this stage uses at most $3k^{1+3c} \log k$ comparison queries. We are now left to show the following lemma.

\begin{lemma}\label{lem:prune-and-rank-stage-2}
    With probability at least $1 - \frac{1}{k}$, the set $T$ from line 13 contains the maximum element. Corollarily, the returned set contains the maximum element with probability $1 - O(k^{-c})$ by line 14 of~\Cref{alg:prune-and-rank}.
\end{lemma}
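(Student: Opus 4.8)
The plan is to show that the estimated rank $w_i$ concentrates around the true number of elements larger than $x_i$ (scaled by $3k^{2c}\log k / (|X|-1)$), so that the ordering induced by the $w_i$'s roughly agrees with the ordering induced by the true ranks $r_i$. Since the maximum $x_m$ has true rank $r_m \le k$ among the $|X| \le k^{1+c}$ elements, and there are at most $2k$ elements with true rank $\le k$ (the $k$ corrupted elements plus the at most $k$ uncorrupted elements above $x_m$... actually at most $k+1$ uncorrupted elements with rank $\le k$ in this smaller set, but the key point is the count is $O(k)$), the maximum sits comfortably inside the top band. The slack $k^{1-c}$ in the size of $T$ (we take $2k + k^{1-c}$ rather than $2k+1$) is exactly the buffer that absorbs estimation errors: I want to argue that $x_m$ is knocked out of $T$ only if at least $k^{1-c}$ elements that are genuinely below $x_m$ get an estimate $w_j \le w_m$, or $x_m$ itself gets an unusually large estimate, and show both are unlikely.

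The key steps, in order: (1) Fix $x_m$ and let $p = r_m/(|X|-1) \le k/(|X|-1)$ be the true fraction of larger elements; note $w_m \sim \mathrm{Bin}(N, p)$ with $N = 3k^{2c}\log k$, so $\E[w_m] = Np$. (2) For any uncorrupted element $x_j$ with true rank $r_j$ strictly larger than $r_m$ (so $x_j$ is genuinely below $x_m$ among uncorrupted elements), $\E[w_j] = N r_j/(|X|-1) > \E[w_m]$; I'd like a clean gap, so I'd split into elements whose true rank exceeds $r_m$ by at least some additive $\Delta$ (say $\Delta = \sqrt{N/(3\log k)} \cdot$ something, chosen so Chernoff gives failure probability $\le k^{-3}$ per element) versus the few elements within $\Delta$ of $r_m$. (3) By a Chernoff/Hoeffding bound on $\mathrm{Bin}(N,\cdot)$ and a union bound over the at most $k^{1+c}$ elements, with probability $\ge 1 - k^{1+c}\cdot k^{-\Omega(1)}$ every $w_j$ is within $O(\sqrt{N\log k})$ of its mean; choosing the constant $3$ in $N = 3k^{2c}\log k$ and tracking that $\sqrt{N\log k} = \sqrt{3}\,k^{c}\log k$ while the "per-rank" mean spacing is $N/(|X|-1) \ge 3k^{2c}\log k / k^{1+c} = 3k^{c-1}\log k$, so an additive estimation error of $O(k^c\log k)$ corresponds to an ambiguity of only $O(k)$ ranks — wait, that's $k^c\log k / (k^{c-1}\log k) = k$ ranks, which is too many. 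I'd instead push $N$ up: the statement only promises query bound $O(k^{1+3c}\log k) = |X|\cdot N$, so I can afford $N = \Theta(k^{2c}\log k)$ and the ambiguity in ranks is $O(\sqrt{N\log k}\cdot(|X|-1)/N) = O((|X|-1)\sqrt{\log k/N}) = O(k^{1+c}\cdot k^{-c}) = O(k)$ — so the band $2k + k^{1-c}$ is NOT obviously wide enough, and this is the crux.

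The main obstacle is precisely reconciling the additive $k^{1-c}$ slack in $|T|$ with the rank-estimation error. The resolution I'd pursue: don't ask that \emph{every} estimate be accurate to $\pm k$ ranks; instead, note that for $x_m$ to be excluded we need $\ge k^{1-c}$ \emph{distinct} genuinely-smaller elements to each independently overshoot past $w_m$, and argue that the expected number of such "inversions above $x_m$" is small — each element $x_j$ below $x_m$ has $\Pr[w_j \le w_m]$ decaying geometrically in its rank-distance from $r_m$, so $\E[\#\{j : x_j \text{ below } x_m,\ w_j \le w_m\}]$ is a convergent sum dominated by the nearest few ranks, hence $o(k^{1-c})$ once $N$ is a large enough multiple of $k^{2c}\log k$; then Markov (or a Chernoff bound exploiting near-independence of the $A_j$'s across $j$) gives that exceeding $k^{1-c}$ such inversions, together with the event $\{w_m > Np + k^{1-c}\cdot(\text{spacing})\}$, has total probability $\le 1/k$. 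Combined with \Cref{lem:prune-and-rank-stage-1}'s conditioning (so that $x_m \in X$ and $|X| \le k^{1+c}$ hold), a union bound yields $T \ni x_m$ w.p. $\ge 1 - 1/k$; the corollary about $S$ then follows since $S$ is a uniform $(2k+1)$-subset of the $(2k+k^{1-c})$-set $T$, so $\Pr[x_m \in S \mid x_m \in T] = (2k+1)/(2k+k^{1-c}) = 1 - O(k^{-c})$, and multiplying the probabilities (or union-bounding the complements) gives the claimed $1 - O(k^{-c})$.
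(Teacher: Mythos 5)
Your instinct that ``the band $2k+k^{1-c}$ is NOT obviously wide enough'' is exactly right, and in fact you have put your finger on a genuine error in the paper's own argument. The paper proves \Cref{lem:prune-and-rank-stage-2} by first establishing \Cref{lem:pairwise-comparison-high-prob} (if $r_i-r_j\ge k^{1-c}$ then $w_i>w_j$ w.p.\ $\ge 1-k^{-(2+c)}$), counting that at most $2k+k^{1-c}$ elements have $r_j\le r_m+k^{1-c}$, and union-bounding over the remaining $\le k^{1+c}$ elements. This is cleaner and more direct than your per-element-inversion route, but it hinges on a Hoeffding calculation that does not check out: with $q=3k^{2c}\log k$ samples per element, the deviation of $w_j-w_i$ from its mean at threshold $t=q/k^{2c}$ is bounded by $\exp\left(-t^2/q\right)=\exp\left(-q/k^{4c}\right)=\exp\left(-3\log k/k^{2c}\right)$, which the paper mistakenly simplifies to $\exp(-3\log k)=k^{-3}$. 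For any fixed $c>0$ this quantity tends to $1$, not to $0$. Your own back-of-the-envelope (rank ambiguity $\approx |X|\sqrt{\log k/q}\approx k$) is just another way of seeing the same gap. For the separation lemma to actually hold at the rank scale $k^{1-c}$ one needs $q=\Theta(k^{4c}\log k)$, not $\Theta(k^{2c}\log k)$, which pushes the second-stage query count to $O(k^{1+5c}\log k)$; this is still $O(n+k\,\polylog k)$ after re-choosing $c=\Theta(\log\log k/\log k)$, but the exponents in the theorem as stated do not match.

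Your proposed alternative (count expected inversions past $x_m$ and apply Markov) is a genuinely different decomposition from the paper's per-pair union bound, and it is the right kind of idea to try to squeeze more out of a fixed $q$. However it does not rescue the stated parameters either. With $q=Ck^{2c}\log k$ and $|X|\le k^{1+c}$, Hoeffding gives $\Pr[w_{j}\le w_m]\le\exp\left(-q\delta^2/|X|^2\right)$ for an element at rank distance $\delta$, and summing over $\delta$ yields an expected inversion count on the order of $|X|/\sqrt{q}\approx k/\sqrt{C\log k}$, with a matching lower bound because elements within $\delta\lesssim k/\sqrt{\log k}$ of $x_m$ are inverted with constant probability. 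Since $k^{1-c}=k/k^c$, requiring this to be $o(k^{1-c})$ forces $k^c\ll\sqrt{C\log k}$, i.e.\ $C\gg k^{2c}/\log k$ --- not a constant multiple as you hoped, but exactly the same $q=\omega(k^{4c})$ blow-up the paper needs. So the two approaches (per-pair union bound vs.\ inversion counting) converge to the same fix. The corollary step in your last paragraph, $\Pr[x_m\in S\mid x_m\in T]=(2k+1)/(2k+k^{1-c})=1-O(k^{-c})$, matches the paper and is fine.
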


Towards proving this theorem, we let $r_i$ be the rank of $x_i$, i.e. the number of elements that is larger than $x_i$ in $X$. The following technical lemma ensures separability between the sampled ranks $w_i$'s if there was a large enough gap in the true ranks.

\begin{lemma}\label{lem:pairwise-comparison-high-prob}
    For any two elements $x_i, x_j \in X$, if $r_i - r_j \ge k^{1-c}$, then $w_i > w_j$ with probability at least $1 - \frac{1}{k^{2+c}}$.
\end{lemma}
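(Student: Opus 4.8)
The plan is to reduce the claim to a concentration bound on sums of independent indicator variables. Recall that $w_i$ counts the number of elements in the multiset $A_i$ (of size $s := 3k^{2c}\log k$, sampled uniformly with replacement from $X \setminus \{x_i\}$) that are larger than $x_i$; similarly for $w_j$ and $A_j$. Since $|X| \le k^{1+c}$, an element drawn uniformly from $X \setminus \{x_i\}$ is larger than $x_i$ with probability roughly $p_i := r_i / (|X| - 1)$, and likewise $p_j := r_j / (|X| - 1)$ (one has to be slightly careful because $x_j$ itself is excluded from $A_i$ and $x_i$ from $A_j$, but this perturbs each probability by at most $1/(|X|-1)$, which is negligible compared to the gap we are exploiting). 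So $\E[w_i] \approx s\, p_i$ and $\E[w_j] \approx s\, p_j$, and the hypothesis $r_i - r_j \ge k^{1-c}$ gives a gap $\E[w_i] - \E[w_j] \approx s (r_i - r_j)/(|X|-1) \ge 3 k^{2c}\log k \cdot k^{1-c} / k^{1+c} = 3\log k$ in expectation.

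The key steps, in order: (1) Write $w_i = \sum_{\ell=1}^{s} \mathbbm{1}[a_\ell > x_i]$ over the independent draws $a_\ell \in A_i$, and similarly $w_j$; note $w_i$ and $w_j$ are independent since $A_i$ and $A_j$ are drawn independently. (2) Bound the true probabilities: an element drawn uniformly from $X\setminus\{x_i\}$ exceeds $x_i$ with probability in $[\,(r_i-1)/(|X|-1),\, r_i/(|X|-1)\,]$ depending on whether $x_j$ is among the larger ones, and symmetrically for $j$; hence $\E[w_i] - \E[w_j] \ge s(r_i - r_j - 1)/(|X|-1) \ge 2\log k$ for $k$ large, using $r_i - r_j \ge k^{1-c}$ and $|X| \le k^{1+c}$. (3) Apply a Chernoff/Hoeffding bound to each of $w_i$ and $w_j$ separately: the probability that $w_i$ falls below $\E[w_i] - \log k$ (additively), or that $w_j$ exceeds $\E[w_j] + \log k$, is at most $\exp(-\Omega((\log k)^2 / s)) = \exp(-\Omega(1/k^{2c}))$ — wait, this is not small enough, so instead I would use a \emph{multiplicative} deviation of a constant fraction, or better, keep the additive slack proportional to the gap. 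Concretely, set the threshold halfway: if $w_i \le \E[w_i] - \log k$ or $w_j \ge \E[w_j] + \log k$ both fail, then $w_i - w_j \ge (\E[w_i] - \E[w_j]) - 2\log k \ge 0$, and in fact we want strict inequality, so I would carry a slightly larger baseline gap (e.g. arrange $\E[w_i]-\E[w_j] \ge 2\log k + 1$, absorbing the $-1$ correction and strictness into the constant $3$ in the sample size). (4) Bound each failure probability by Hoeffding: $\Pr[w_i \le \E[w_i] - \log k] \le \exp(-2(\log k)^2/s) = \exp(-\tfrac{2(\log k)^2}{3k^{2c}\log k}) = \exp(-\tfrac{2\log k}{3k^{2c}})$, which is still not $\le \tfrac{1}{2}k^{-(2+c)}$ for large $k$.

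This is the crux of the difficulty: a purely additive Hoeffding bound with slack $\Theta(\log k)$ on a sum of $s = \Theta(k^{2c}\log k)$ bounded terms gives only $\exp(-\Theta(\log k / k^{2c}))$, which is far too weak. The fix is to use a \emph{multiplicative} Chernoff bound exploiting that $\E[w_i]$ and $\E[w_j]$ are themselves of order at least $\log k$ is not guaranteed either (they could be small if $r_i, r_j$ are small). The real resolution, which I expect the authors use, is that the relevant regime forces the \emph{gap} $\E[w_i] - \E[w_j]$ to be $\Theta(s \cdot k^{1-c}/k^{1+c}) = \Theta(s/k^{2c}) = \Theta(\log k)$, and one applies Chernoff to the difference with multiplicative slack $1/2$: $\Pr[w_i - w_j \le 0] \le \Pr[w_i \le \tfrac12(\E w_i + \E w_j) + \text{(gap)}/2 \text{ fails}] \le 2\exp(-\Theta(\mu \varepsilon^2))$ where the effective $\mu \varepsilon^2 = \Theta((\E w_i - \E w_j)^2 / (\E w_i + \E w_j))$ — but if $\E w_i + \E w_j = \Theta(s) = \Theta(k^{2c}\log k)$, this is again only $\Theta(\log k / k^{2c})$. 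Since the lemma as stated demands probability $1 - k^{-(2+c)}$, and that is a constant-degree polynomial bound, I believe the sample size must effectively behave like $\Theta(k^{2c}\log k)$ \emph{with the deviation measured relative to the standard deviation $\sqrt{s}$}: the gap $\Theta(\log k)$ divided by $\sqrt{s} = \Theta(\sqrt{k^{2c}\log k}) = \Theta(k^c\sqrt{\log k})$ is $\Theta(\sqrt{\log k}/k^c)$ standard deviations, giving tail $\exp(-\Theta(\log k/k^{2c}))$ once more. I would therefore double-check the sample-size constant: it is plausible the intended reading is $A_i$ of size $3k^{2c}\log^2 k$ or that the gap in the hypothesis is $k^{1-c}\log k$; assuming the stated parameters, the honest plan is to (a) set up the independent-indicator representation and the expectation gap as above, (b) invoke a Chernoff bound on $w_i - w_j$ (equivalently on $w_i$ above $\E w_i - \tfrac12\text{gap}$ and $w_j$ below $\E w_j + \tfrac12\text{gap}$), and (c) conclude $\Pr[w_i \le w_j] \le 2e^{-\Theta((\text{gap})^2/s)} \le k^{-(2+c)}$, flagging that the constant $3$ in the sample size (or a $\log$ factor) is what makes step (c) go through, and I would simply cite the appropriate Chernoff inequality rather than reprove it.
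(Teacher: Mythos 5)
Your plan and the paper's proof are the same at heart: both write $w_j - w_i$ as a sum of independent Bernoulli indicators, use $r_i - r_j \ge k^{1-c}$ together with $|X| \le k^{1+c}$ to lower bound the gap of the means by $q/k^{2c} = 3\log k$ (where $q = 3k^{2c}\log k$ is the sample size), and then apply Hoeffding to the difference. The key point, though, is that you have correctly uncovered a genuine arithmetic error in the paper's proof at exactly the place where you got stuck. The paper writes
\[
\Pr\Bigl[(w_j - w_i) - \E[w_j - w_i] \ge \tfrac{q}{k^{2c}}\Bigr] \le \exp\Bigl(-\tfrac{2(q/k^{2c})^2}{2q}\Bigr) = \exp(-3\log k) = k^{-3},
\]
but $\frac{2(q/k^{2c})^2}{2q} = \frac{q}{k^{4c}} = \frac{3k^{2c}\log k}{k^{4c}} = \frac{3\log k}{k^{2c}}$, \emph{not} $3\log k$; a factor of $k^{2c}$ in the denominator was silently dropped. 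So the bound actually obtained is $\exp(-3\log k/k^{2c})$, which is $1 - o(1)$ rather than $k^{-3}$, and the lemma as stated is not established (and, as your back-of-envelope standard-deviation calculation suggests, is likely false for the given sample size when $p_i, p_j$ are moderately large). Your diagnosis --- that an additive Hoeffding slack of $\Theta(\log k)$ against $\Theta(k^{2c}\log k)$ terms only yields $\exp(-\Theta(\log k/k^{2c}))$ --- is exactly right.

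Two small corrections to your speculative fixes. Replacing the sample size with $3k^{2c}\log^2 k$ does not help: the exponent becomes $3\log^2 k/k^{2c}$, still with the fatal $k^{2c}$ in the denominator. Similarly, strengthening the rank-gap hypothesis to $k^{1-c}\log k$ only gives exponent $3\log^3 k/k^{2c}$. The repair that actually makes the arithmetic close is to take the per-element sample size to be $q = \Theta(k^{4c}\log k)$: then the mean gap is $q/k^{2c} = \Theta(k^{2c}\log k)$ and Hoeffding gives $\exp(-q/k^{4c}) = \exp(-\Theta(\log k)) = k^{-\Theta(1)}$ as desired. This propagates: stage two then costs $|X|\cdot q = O(k^{1+5c}\log k)$ rather than the $O(k^{1+3c}\log k)$ claimed in \Cref{thm:main-thm-randomized}, so that theorem's exponent should be $1+5c$; the qualitative conclusion of $O(n + k\operatorname{polylog} k)$ queries with $1 - 1/\operatorname{polylog} k$ success probability still holds after choosing $c = \Theta(\log\log k/\log k)$.
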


\begin{proof}[Proof of~\Cref{lem:pairwise-comparison-high-prob}]
Let $q = 3k^{2c} \log k$. Observe that $w_j - w_i$ is distributed as $\sum_{k = 1}^q B_k - \sum_{k=1}^q C_k$, where $B_k$'s and $C_k$'s are mutually independent Bernoulli random variables such that $\Pr[B_k = 1] = \frac{r_j}{|S|}$ and $\Pr[C_k = 1] = \frac{r_i}{|S|}$ for all $k \in [q]$.

As $\E[w_j - w_i] = \frac{q(r_j - r_i)}{|S|} \le -\frac{q k^{1-c}}{k^{1+c}} = -\frac{q}{k^{2c}}$, by Hoeffding's bound we have
\begin{align*}
    \Pr[w_j - w_i \ge 0] &\le \Pr\left[(w_j - w_i) - \E[w_j - w_i] \ge \frac{q}{k^{2c}}\right] \\
        &\le \exp\left(-\frac{2\left(\frac{q}{k^{2c}}\right)^2}{2q}\right) \\
        &= \exp(-3 \log k) = \frac{1}{k^3} \le \frac{1}{k^{2+c}}
\end{align*}
which means $\Pr[w_i > w_j] \ge 1 - \frac{1}{k^{2+c}}$.
\end{proof}

With this lemma, we can now prove~\Cref{lem:prune-and-rank-stage-2}.

\begin{proof}[Proof of~\Cref{lem:prune-and-rank-stage-2}]
    Let $x_m$ be the maximum, and note that $r_m \le k - 1$. Therefore, there are at most $2k + k^{1 - c}$ other elements $x_j \in X$ such that $r_j \le r_m + k^{1 - c}$ (including the $k$ corrupted elements and the largest $k + k^{1+c}$ uncorrupted elements).
    
    For every element $x_o \in S$ that does not satisfy the previous condition, we know that $r_o - r_m > k^{1-c}$. By~\Cref{lem:pairwise-comparison-high-prob}, we have $w_m > w_o$ with probability at least $1-\frac{1}{k^{2+c}}$. By a union bound, this event holds true for all such $x_o$ with probability at least $1 - |X| \cdot \frac{1}{k^{2 + c}} \ge 1 - \frac{1}{k}$. In such an event, we know that $x_m$ will be among the $2k + k^{1-c}$ elements with the smallest $w_i$'s, and hence $T$ contains $x_m$.
\end{proof}

\section{Conclusion and Open Problems}\label{sec:conclusion}

In this work, we have developed a novel model for algorithm design under unreliable information, specifically for the task of finding the uncorrupted maximum among a list containing possibly corrupted elements. We first settled the output size of any algorithm by showing that at least $\min\{n, 2k+1\}$ must be included for an algorithm to always contain the maximum element, where $k$ is the number of corrupted elements. Restricted to algorithms whose output size is exactly such quantity, we presented a deterministic and a randomized algorithm that uses almost matching number of queries to their respective lower bounds. For deterministic algorithms, we showed a lower bound of $\Omega(nk)$ queries, while our deterministic algorithm matched the lower bound by using only $O(nk)$ pairwise comparisons. For randomized algorithms, we presented a high probability randomized algorithm that uses only $O(n + k \polylog k)$ comparison queries, which almost matches the $\Omega(n)$ lower bound for any randomized algorithm.

As possible directions of future research, the most immediate task is to close the gaps for both deterministic and randomized algorithms. For deterministic algorithms, we remark that there is a constant gap of $2$, so it would be interesting if we could settle the exact number of pairwise comparisons. For randomized algorithms, we conjecture that there is also a $\Omega(k \polylog k)$ lower bound, due to the following argument. Consider the bad instance in~\Cref{thm:randomized-lb}. For deterministic algorithms, there is an easy argument for a $\Omega(k \log n)$ lower bound in this instance: since there can be $\binom{n}{2k + 1}$ ways to choose the critical set, any deterministic algorithm must use at least $\log \binom{n}{2k + 1} = \Omega(k \log n)$ queries to distinguish all instances (with the mild assumption that $2k + 1 \le n/2$). We believe this argument can also be extended to randomized algorithms, further closing the gap between the lower bound and the upper bound.

Another interesting question is whether we can do more complex tasks on this model of unreliable computation elements. A natural extension of this result would be the selection or partition problems, but a more interesting question is if we can do sorting or build data structures in this setting (for example, $k$-d trees can be made resilient in the faulty-memory model~\cite{gieseke2012resilient}).

\paragraph{Acknowledgements.}

We thank Thatchaphol Saranurak and Shuchi Chawla for helpful conversations during the prior and late stages of this work.

\newpage
\bibliographystyle{plainnat}
\bibliography{main}

% \newpage
% \appendix
% \section*{Appendix}

\end{document}